\newtheorem{prop}{Proposition}
\newcommand{\alert}[1]{{\color{black}{#1}}}
\newcommand{\erf}{\mathrm{erf}\,}
\newcommand{\erfc}{\mathrm{erfc}\,}
\newcommand{\abs}[1]{\left| #1 \right|}
\renewcommand*\env@matrix[1][*\c@MaxMatrixCols c]{%
  \hskip -\arraycolsep
  \let\@ifnextchar\new@ifnextchar
  \array{#1}}
\DeclarePairedDelimiterX\MeijerM[3]{\lparen}{\rparen}%
{\begin{matrix}#1 \\ #2\end{matrix}\delimsize\vert\,#3}
\newcommand\MeijerG[8][]{%
  G^{\,#2,#3}_{#4,#5}\MeijerM[#1]{#6}{#7}{#8}}
\newcommand\MeijerG*[7]{%
  G^{\,#1,#2}_{#3,#4}\MeijerM*{#5}{#6}{#7}}
\begin{document}

\title{Traffic Minimizing Caching and Latent Variable Distributions of Order Statistics}

\author{Joonas~P\"a\"akk\"onen, Prathapasinghe~Dharmawansa, \IEEEmembership{Member,~IEEE}, Ragnar~Freij-Hollanti, Camilla~Hollanti, \IEEEmembership{Member,~IEEE}, and~Olav~Tirkkonen, \IEEEmembership{Member,~IEEE}
\thanks{Joonas P\"a\"akk\"onen and Olav Tirkkonen are with the Department of Communications and Networking, School of Electrical Engineering, Aalto University, Finland (email: \{joonas.paakkonen, olav.tirkkonen\}@aalto.fi).}
\thanks{Ragnar Freij-Hollanti and Camilla Hollanti are with the Department of Mathematics and Systems Analysis, Aalto University, Finland (email: \{ragnar.freij, camilla.hollanti\}@aalto.fi).}
\thanks{Prathapasinghe Dharmawansa is with the Department of Electronic and Telecommunication Engineering, University of Moratuwa, Sri Lanka (email: prathapa@uom.lk).}
\thanks{This work was funded in part by TKK tukis\"a\"ati\"o and the Academy of Finland (grant 299916). The work of C. Hollanti is supported by the Academy of Finland grants \#276031, \#282938, \#303819.}
}

\maketitle

\begin{abstract}
Given a statistical model for the request frequencies and sizes of data objects in a caching system, we derive the probability density of the size of the file that accounts for the largest amount of data traffic. This is equivalent to finding the required size of the cache for a caching placement that maximizes the expected \emph{byte hit ratio} for given file size and popularity distributions. The file that maximizes the expected byte hit ratio is the file for which the product of its size and popularity is the highest -- thus, it is the file that incurs the greatest load on the network. We generalize this theoretical problem to cover factors and addends of arbitrary order statistics for given parent distributions. Further, we study the asymptotic behavior of these distributions. We give several factor and addend densities of widely-used distributions, and verify our results by extensive computer simulations.
\end{abstract}

\begin{IEEEkeywords}
Caching, order statistics
\end{IEEEkeywords}

\IEEEpeerreviewmaketitle

\section{Introduction}
Global IP data traffic is expected to triple over the next five years to more than two zettabytes per year \cite{cisco}. One way to decrease data traffic over the Internet, reduce latencies and server workload is storing popular data close to end-users, which is commonly known as \emph{caching}. The main motivation behind caching data is that one rather stores popular data on intermediate nodes than burdens the data transmission networks. Information centric networks (ICN) and content delivery networks (CDN) are examples of modern paradigms where caching plays a pivotal role. 

While web caching is a rather mature area of research with a history of several decades \cite{belady1966,wangoct1999,podlipnig2003,borst2010}, recently also wireless caching, where contents are stored at either base stations or user equipment themselves, has gained plenty of attention \cite{bastug,gregori,golrezaeibase,maddah,song,ji2016}. Despite the differences in physical architectures, the main questions remain the same. Probably the most fundamental problem of caching is related to cache content replacement. Caches need to update their content due to the dynamic changes of file request frequencies to avoid \emph{cache pollution}, which occurs when unwanted data remain in the cache even though more important data could be cached instead.

The most common metric measuring the performance of a cache is the \emph{hit rate}, which is defined as the ratio between the number of data objects served from the cache and the total number of requests. The drawback of this metric is that a high hit rate does not necessarily imply a high reduction in data traffic. This is the case especially if the cacheable files are of largely varying sizes. Hence, a better metric is the \emph{byte hit rate}, which is defined as the ratio between the amount of data traffic served by the caches and the total amount of data traffic generated by the clients. The higher the byte hit rate, the more traffic is offloaded from the origin server.

There exists a natural tradeoff between the cache size and the amount of data offloaded from the server, which has been studied in, \emph{e.g.}, \cite{tamoorjune2016}. The energy-efficiency of caching placement has been optimized in \cite{osmanmay2011}. Optimal cache sizing has been studied for Peer-to-Peer (P2P) systems with realistic bandwidth and cache storage costs (see, \emph{e.g.}, \cite{zhaioptimaldec2011}). Recently, in \cite{liumay2016}, the authors find the required cache size in backhaul-limited wireless networks, while \cite{zhang2015} employs a scheme where the caching probability increases proportionally with file size. Interestingly, studies on video data show that not all parts of video files are equally popular, so a natural extension of popularity-based caching is caching the most popular parts of data files \cite{maggi}.

Several cache replacement algorithms have been suggested in the literature (see, \emph{e.g.}, \cite{podlipnig2003} for a survey). Some of the most widely-used caching replacement algorithms are Least-Recently-Used (LRU), Least-Frequently-Used (LFU) and SIZE. When an object in the cache is about to be replaced, LRU evicts the objects that were least recently used, LFU evicts the least frequently used objects, and SIZE evicts the largest objects. While these policies are simple and perform well in terms of the hit rate, they perform poorly in terms of the byte hit rate. To address this issue,\cite{cherkasova1998} presents an efficient cache replacement algorithm that takes both the file size and the file request frequencies into account. Since then, several replacement algorithms have been proposed \cite{ali2011,zhang2012,podlipnig2003}. Furthermore, recent research has actively studied coded caching \cite{niesen2014,zhang2015}, and also incorporated machine learning techniques in replacement policies \cite{romano2011,ali2009,abdalla2015}.

A typical feature of data traffic is that the popularity of data objects is highly time-variant, a phenomenon which in the caching literature is known as \emph{temporal locality} \cite{olmos,traverso}. In other words, in a realistic caching system content needs to be constantly updated to keep up with the dynamic behavior of user preferences. Furthermore, a realistic assumption is that the exact sizes and numbers of requests of cacheable data files cannot be known beforehand, unlike models such as the Independent Reference Model (IRM) \cite{fagin} suggest. Nevertheless, past observations can be utilized to estimate file popularities (see, \emph{e.g.}, \cite{bharath} and the references therein).

In this paper, motivated by the aforementioned observations, we study a caching system where the cached files are chosen so that the expected amount of data traffic served by the cache is maximized. Specifically, the files are chosen so that the expected byte hit rate is maximized. We derive the probability density function for the required size of the cache for given file size and popularity distributions in the general case. For instance, our research provides an answer to the following question: How much storage space is needed for the file that generates most data traffic? To the best of our knowledge, this is the first paper that takes such a statistical approach to cache size requirements.

An example of a ranked file catalogue is shown in Table \ref{examp}. The files are ranked according to their bandwidth consumption, \emph{i.e.}, the product of their size and their popularity. This product we call the \emph{importance} of the file. The more bandwidth a file consumes, the more important it is to cache it to minimize the expected backhaul data traffic. We see that, in the example of Table \ref{examp}, neither the largest nor the most popular object is the most important one.

\begin{table}[!h]
\begin{center}
    \begin{tabular}{| c | c | c |}
    \hline
    File size (bits) & File popularity $\left(\text{requests/sec}\right)$ & Bandwidth consumption (bits/sec) \\ \hline
    22 & 15 & 330 \\ \hline
    73 & 2 & 146 \\ \hline
    4 & 36 & 144 \\ \hline
    \end{tabular}
\end{center}
\caption{An example file catalogue of three files. The size of the most important file is 22.}
\label{examp}
\end{table}

Solving the aforementioned cache size problem opens up a new direction of research on latent variables of order statistics. Along this line, we extend our contribution of \emph{factor distributions} to \emph{addend distributions}. Furthermore, we present several examples of factor and addend distributions for widely-used parent distributions. The contributions of this work can be summarized as follows.

\begin{itemize}
  \item We find the probability density function of the $k^{\text{th}}$ smallest file of a file catalogue the file sizes and expected request frequencies of which are drawn from given \alert{independent} probability distributions. This problem is analytically formulated and solved.
  \item We derive a method to find the required cache size for a case where a certain percentage of all cacheable files are required to be cached so that the byte hit rate is maximized.
  \item We derive closed-form addend and factor densities for certain widely-used probability distributions and verify them by computer simulations.
\end{itemize}

The rest of the paper is organized as follows. Section \ref{sec:system} presents the caching system model. Section \ref{analysissec} provides closed-form expressions for the addend and factor densities with concrete examples for widely-used probability distributions. Section \ref{asympsec} investigates the asymptotic behaviour of our results. Section \ref{requcache} solves the cache size problem with illustrative examples. Section \ref{conclusionssec} provides concluding remarks and envisages potential future work.

\section{System Model}\label{sec:system}
Consider a network with users requesting files from a remote server with a cache between the users and the server according to Figure \ref{fig:symo}. The cache contents are updated periodically to accommodate for the temporal locality nature of cacheable files. At each cache content update, the sizes and the popularities of the potentially cacheable files are sampled independently from file size and popularity distributions, respectively. The system designer is only aware of the statistical distributions of file sizes and popularities, and must choose an appropriate cache size given a requirement on either the number of files, or a percentage of files in a file catalogue of $n$ files, that should fit in the cache.

If the file requested by a user is cached, the cache serves the user. However, if the requested file is not cached, the origin server transmits the file and the link between the cache and the server must be used. The cache has a limited storage capacity and stores full data objects so that the expected byte hit rate is maximized. For tractability, we assume perfect knowledge of the popularity of each cacheable file, which we further assume to be static until the next cache content update.

The files are ranked according to their bandwidth consumption, \emph{i.e.}, the product of their size and popularity. This product we call the \emph{importance} of the file. We are interested in, for instance, the size of the most important file. Knowing this size is crucial for designing caching systems and the required size of the caches to minimize the backhaul load. In other words, we are interested in the density of a factor of a product. Henceforth, we call this density the \emph{factor distribution}.

\begin{figure}[!h]
\centering
\includegraphics[scale=.5]{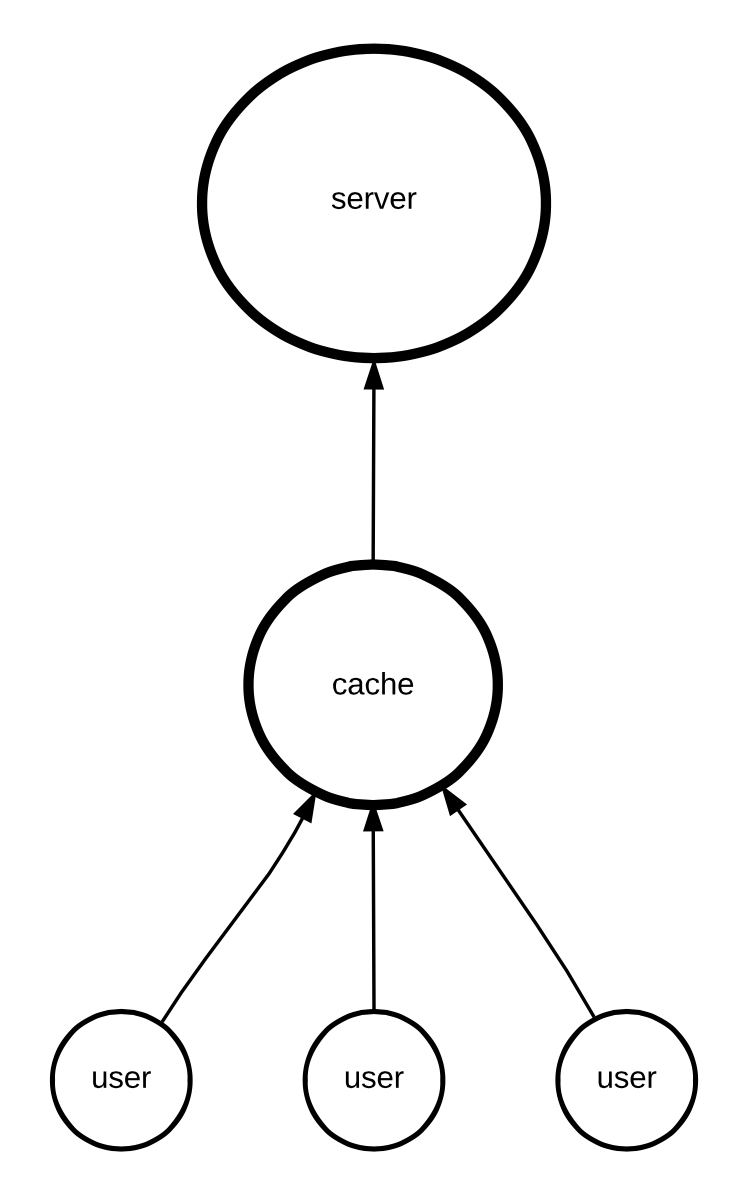}
\caption{Caching model. Files stored on the cache are served by the cache, while files not stored on the cache must be retrieved from the server.}
\label{fig:symo}
\end{figure}

\section{Addend and Factor Distributions of Order Statistics -- General Case}\label{analysissec}
In this section, we show our two main results, \emph{i.e.}, general closed-form expressions for the densities of addends (resp. factors) of the $k^{\text{th}}$ smallest sum (resp. product). Throughout this paper, we call the corresponding distributions \emph{addend} and \emph{factor distributions}, and \alert{we denote the first random variable in a sum or a product as $\tilde{X}$}. \alert{} The $k^{\text{th}}$ smallest random variate is commonly known as the $k^{\text{th}}$ \emph{order statistic}. For further reading, principal references on the theory of order statistics include \cite{clark,david,gungor,wilks}.

Functions of random variables have been excessively studied in the past. Simple examples are the sums and products of independent random variables. While sum distributions are generally direct convolutions of probability density functions, product distributions are often more tedious to derive. Seminal work on product distributions, as well as other functions of random variables, has been done in \cite{donahue1964,rohat1976,springer1979,cook1981}. In this paper, we need to apply both some theory of order statistics and some theory of sum and product distributions to derive addend and factor distributions.

The following example illustrates the main idea of this work for generating random variates of addends. Let $X$ and $Y$ be random variables of a given parent distribution. Now let us generate samples $x_1, x_2, y_1$ and $y_2$ and thence sums $x_1+y_1$ and $x_2+y_2$. Now let us look at the greater of these two sums. If $x_1+y_1>x_2+y_2$, then $x_1$ is a variate of the addend of the maximum of two such sums. Our main goal is to find the density of $X_1$ for which we give a general solution in this section, as well as several examples in Section \ref{sec:addex}.

Let us now introduce some mathematical notation. Let \alert{$\{X_1,X_2,X_3,\dots,X_n$\}} denote \alert{independent} random variables with \alert{densities $\{f_{X_1},f_{X_2},\dots,f_{X_n}\}$, respectively}. Also let there be $n$ sums or products that comprise $m$ random variables each. We call these random variables addends for sums and factors for products. We are interested in the distribution of an individual variable of the $k^{\text{th}}$ largest sum, $S=\sum_{i=1}^m X_i$, the distribution of which we denote $F_S(\cdot)$, or product, $U=\prod_{i=1}^m X_i$, the distribution of which we denote $F_U(\cdot)$. We denote the sums as $S_j = \sum_{i=1}^m X_{j{,}i}$, with $j=1,2,...,n$, and the corresponding $k^{\text{th}}$ order statistic out of $n$ variates as $S_{(k{,}n)}$. For further convenience, let us define new random variables $T = \sum_{i = 2}^m X_i$ and $V=\prod_{i = 2}^m X_i$. Also, let $\mathcal{K}_{n,k} = \frac{n!}{(k{-}1)!(n{-}k)!}$. The newly introduced random variables can be visualized as shown in matrices \eqref{matri1} and \eqref{matri2}:

\begin{align}\label{matri1}
 \begin{bmatrix}[cccc|c]
  X_{1,1} & X_{1,2} & \cdots & X_{1,n} & S_1=\sum_{i=1}^m X_{1,i} \\
  X_{2,1} & X_{2,2} & \cdots & X_{2,n} & S_2=\sum_{i=1}^m X_{2,i} \\
  \vdots  & \vdots  & \ddots & \vdots & \vdots \\
  X_{n,1} & X_{n,2} & \cdots & X_{n,m} & S_n=\sum_{i=1}^m X_{n,i}
 \end{bmatrix}
\end{align}

\begin{align}\label{matri2}
 \begin{bmatrix}[cccc|c]
  X_{1,1} & X_{1,2} & \cdots & X_{1,n} & U_1=\prod_{i=1}^m X_{1,i} \\
  X_{2,1} & X_{2,2} & \cdots & X_{2,n} & U_2=\prod_{i=1}^m X_{2,i} \\
  \vdots  & \vdots  & \ddots & \vdots & \vdots \\
  X_{n,1} & X_{n,2} & \cdots & X_{n,m} & U_n=\prod_{i=1}^m X_{n,i}
 \end{bmatrix}
\end{align}

In general, generating realizations of the addends and factors can be done in the following way: sample \alert{the parent distributions of each addend or factor} to find all $nm$ samples and enter them into an $n\times m$ matrix. In the case of addend distribution, append the matrix with a column with row sums of the random variates as in \eqref{matri1}. In the case of factor distribution, append the matrix with a column with row products of the random variates as in \eqref{matri2}, and then sort the matrix according to the newly generated last column in ascending order so that the smallest value of the column is on top and the largest value is on the bottom.

\alert{Without loss of generality, we now look for the density of the first addend or factor of the individual variates $x$ in the $k^{\text{th}}$ row of the matrix of $n$ rows. Recall that the corresponding random variable $\tilde{X}$ denotes both the addends and the factors. The densities of this random variable are presented in the following propositions, which are the main results of this work.}

\begin{prop}\label{genederi}
The density of the \alert{first} addend of the $k^{\text{th}}$ order statistic of a length-$n$ sample of the sum $S=\sum_{i=1}^m X_i$ is
\begin{align}
\label{eq_prop1}
f_{\tilde{X}}(x) = \mathcal{K}_{n,k} f_{X}(x) \mathbb{E}_{T} \{F_S(x{+}t)^{k-1} (1{-}F_S(x{+}t))^{n-k}\}
\end{align}
where $\mathbb{E}_{T}\{\cdot\}$ denotes expectation with respect to the pdf of $T=\sum_{i=2}^m X_i$.
\end{prop}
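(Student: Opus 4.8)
The plan is to derive the density by conditioning and exploiting the exchangeability of the $n$ rows. Since the rows $(X_{j,1},\dots,X_{j,m})$ are independent and identically distributed across $j$, the sums $S_1,\dots,S_n$ are i.i.d.\ with common CDF $F_S$, and $\tilde{X}$ returns the first entry $X_{j,1}$ of whichever row $j$ attains the $k^{\text{th}}$ smallest sum. By symmetry it suffices to analyze a single row, say row $1$, and multiply by $n$: I would write
\begin{equation}
f_{\tilde{X}}(x) = n\,\frac{d}{dx}\,\Pr\{X_{1,1}\le x,\ S_1 = S_{(k,n)}\},
\end{equation}
so the task reduces to computing the joint sub-density of the first addend $X = X_{1,1}$ together with the event that $S_1$ is the $k^{\text{th}}$ order statistic.

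Next I would split $S_1 = X + T_1$ with $T_1 = \sum_{i=2}^m X_{1,i}$ independent of $X$, and condition on $X = x$ and $T_1 = t$, which fixes $S_1 = x+t =: s$. The key observation is that, given $S_1 = s$, the remaining $n-1$ sums $S_2,\dots,S_n$ are i.i.d.\ with CDF $F_S$ and independent of the pair $(X,T_1)$. Hence the event $\{S_1 = S_{(k,n)}\}$ requires exactly $k-1$ of these $n-1$ sums to fall below $s$ and the other $n-k$ to lie above it. Counting the $\binom{n-1}{k-1}$ ways to choose which sums fall below gives
\begin{equation}
\Pr\{S_1 = S_{(k,n)} \mid S_1 = s\} = \binom{n-1}{k-1} F_S(s)^{k-1}\bigl(1-F_S(s)\bigr)^{n-k},
\end{equation}
which depends on $x$ and $t$ only through $s = x+t$.

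Finally I would assemble the pieces. Weighting by the densities $f_X(x)$ and $f_T(t)$ and integrating out $t$ yields
\begin{equation}
f_{\tilde{X}}(x) = n\binom{n-1}{k-1} f_X(x) \int f_T(t)\, F_S(x+t)^{k-1}\bigl(1-F_S(x+t)\bigr)^{n-k}\,dt,
\end{equation}
and the prefactor simplifies via $n\binom{n-1}{k-1} = \frac{n!}{(k-1)!(n-k)!} = \mathcal{K}_{n,k}$, while the integral is by definition $\mathbb{E}_T\{F_S(x+t)^{k-1}(1-F_S(x+t))^{n-k}\}$, giving \eqref{eq_prop1}. I expect the main obstacle to be justifying the conditional-independence step rigorously: one must argue that fixing $X$ and $T_1$ (equivalently $S_1$) leaves the remaining rows' sums i.i.d.\ with distribution $F_S$, so the order-statistic count applies verbatim, and that integrating against $f_T$ rather than conditioning on $S_1 = s$ correctly recovers the marginal of the first addend. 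The combinatorial bookkeeping --- selecting $\binom{n-1}{k-1}$ among the other rows and then folding the symmetry factor $n$ into $\mathcal{K}_{n,k}$ --- is the other place where care is needed to avoid off-by-one errors.
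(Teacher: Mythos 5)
Your proposal is correct and follows essentially the same route as the paper's proof: condition on the first addend $x$ and on $T=t$ so that $S_1=x+t$, count the $\binom{n-1}{k-1}$ ways for exactly $k-1$ of the other $n-1$ i.i.d.\ sums to fall below $x+t$, integrate against $f_T$, and absorb the row-symmetry factor $n$ into $\mathcal{K}_{n,k}=n\binom{n-1}{k-1}$. The only cosmetic difference is that the paper packages the factor $n$ through a ``mixed Bayes rule'' step with $\mathbb{P}(S_{(k,n)}=s)=\tfrac{1}{n}$ in the denominator, whereas you introduce it directly as an exchangeability prefactor --- which is, if anything, the cleaner way to state the same computation.
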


\begin{prop}\label{genederip}
The density of the \alert{first} factor of the $k^{\text{th}}$ order statistic of a length-$n$ sample of the product $U=\prod_{i=1}^m X_i$ is
\begin{align}
\label{eq_prop2}
f_{\tilde{X}}(x) = \mathcal{K}_{n,k} f_{X}(x) \mathbb{E}_{V} \{F_U(xv)^{k-1} (1{-}F_U(xv))^{n-k}\}
\end{align}
where $\mathbb{E}_{V}\{\cdot\}$ denotes expectation with respect to the pdf of $V=\prod_{i=2}^m X_i$.
\end{prop}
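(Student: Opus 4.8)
The plan is to mirror the derivation of Proposition~\ref{genederi}, replacing the additive decomposition $S = \tilde{X} + T$ by the multiplicative one $U = \tilde{X}\cdot V$, where $\tilde{X}=X_1$ has density $f_X$ and $V = \prod_{i=2}^m X_i$ is independent of $\tilde{X}$. First I would fix a single row of the matrix \eqref{matri2}, say row~$1$, and condition on the two independent quantities $X_{1,1}=x$ and $V_1=v$, so that the product carried by that row is exactly $U_1 = xv$. Since all $n$ rows are assembled from mutually independent factors, conditioning on these two values leaves the remaining products $U_2,\dots,U_n$ i.i.d.\ with common CDF $F_U$.

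Next I would impose the order-statistic event. For row~$1$ to carry the $k^{\text{th}}$ smallest product, exactly $k-1$ of the other $n-1$ products must fall below $xv$ and the remaining $n-k$ must exceed it. Because those products are i.i.d.\ $F_U$, each lies below $xv$ with probability $F_U(xv)$ and above it with probability $1-F_U(xv)$, and there are $\binom{n-1}{k-1}$ choices of which rows are the smaller ones. Hence the contribution given $X_{1,1}=x$ and $V_1=v$ is $\binom{n-1}{k-1} f_X(x) f_V(v) F_U(xv)^{k-1}(1-F_U(xv))^{n-k}$.

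To finish I would undo the two auxiliary conditionings. Integrating over $v$ against the density $f_V$ of $V$ produces the expectation $\mathbb{E}_V\{F_U(xv)^{k-1}(1-F_U(xv))^{n-k}\}$, while the fact that any one of the $n$ rows could be the row realizing the $k^{\text{th}}$ order statistic contributes, by exchangeability, a factor of $n$. Collecting constants via $n\binom{n-1}{k-1} = \frac{n!}{(k-1)!(n-k)!} = \mathcal{K}_{n,k}$ then yields exactly \eqref{eq_prop2}.

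The main obstacle I anticipate is not the combinatorics but the careful bookkeeping of independence and exchangeability that legitimizes the two reductions: I must verify that conditioning on $\tilde{X}=x$ and $V=v$ genuinely leaves $U_2,\dots,U_n$ i.i.d.\ $F_U$, which rests on the mutual independence of all $nm$ factors, and that the symmetry argument over which row attains the order statistic does not double-count the $\binom{n-1}{k-1}$ choices among the remaining rows. A secondary technical point specific to the multiplicative case is that writing $U=xv$ and evaluating $F_U$ at this image presumes the factors to be almost surely nonnegative, so that the map is well defined and monotone; in the caching setting all sizes and popularities are positive, so this holds, and I would record it as a standing assumption.
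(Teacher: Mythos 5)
Your proposal is correct and takes essentially the same route as the paper: the paper likewise conditions on the first addend/factor and on the complementary variable, computes the same binomial probability $\binom{n-1}{k-1}F_U(xv)^{k-1}\left(1-F_U(xv)\right)^{n-k}$ averaged over $V$, and recovers the factor $n$ via the mixed form of Bayes' rule with prior $\mathbb{P}\left(U_{(k,n)}=u\right)=\tfrac{1}{n}$, which is exactly your exchangeability step in different clothing. One correction to your closing remark: the nonnegativity assumption you propose to record is unnecessary, because the argument only ever uses $\mathbb{P}\left(U_j < xv\right)=F_U(xv)$ for the real number $xv$ (no monotonicity of the map $(x,v)\mapsto xv$ is invoked), and indeed the paper applies this very proposition to factors of products of $\mathcal{N}(0,\sigma^2)$ variables, which are negative with positive probability.
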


\begin{proof}
Here we present the proof for the addend distribution. Let $f_B{\left(k{-}1;n{-}1,p\right)}$ be the probability mass function of the binomial distribution representing the event of $k{-}1$ successes out of $n{-}1$ trials with success probability $p$, defined as
\begin{align}
f_B{\left(k{-}1;n{-}1,p\right)} = {n{-}1 \choose k{-}1} p^{k{-}1}(1-p)^{n{-}k}.\nonumber
\end{align}
Without loss of generality, we can focus on the case where the $k^{\text{th}}$ order statistic of a length-$n$ sample of the sum $S=\sum_{i=1}^m X_i$, which we denote $S_{(k,n)}$, takes value $s$. Furthermore, here we denote \alert{the first addend of $s$ as $\tilde{X}$}. Applying the so called \emph{mixed form of Bayes rule} (see, \emph{e.g.}, \cite[2.103a]{mixed}) we obtain

\begin{align*}
&f_{\tilde{X}}(x)
\triangleq f_X(x|S_{(k{,}n)}{=}s) = \frac{\mathbb{P}(S_{(k{,}n)}{=}s|\tilde{X}=x)f_X(x)}{\mathbb{P}(S_{(k{,}n)}{=}s)}.
\end{align*}

Now consider a given variate $x$ and a variate $t$ of $T=\sum_{i=2}^m X_i$. The probability that an arbitrarily chosen sum $s$ is the $k^{\text{th}}$ smallest sum is equal to the probability that $k-1$ out of the remaining $n-1$ sums take value smaller than $x+t$. This probability becomes
\begin{align*}
\mathbb{P}(S_{(k{,}n)}{=}s|\tilde{X}=x)&=\int_{-\infty}^\infty f_B\left(k{-}1;n{-}1,F_S(x{+}t)\right)\mathop{dF_T}\nonumber \\
&=\mathbb{E}_{T} \{f_B\left(k{-}1;n{-}1,F_S(x{+}t)\right)\}\nonumber \\
&={n{-}1 \choose k{-}1} \mathbb{E}_{T} \{F_S(x{+}t)^{k-1} (1{-}F_S(x{+}t))^{n-k}\}.
\end{align*}
The probability that an arbitrarily chosen $s$ is the $k^{\text{th}}$ smallest sum out $n$ sums is $\mathbb{P}(S_{(k{,}n)}{=}s)=\frac{1}{n}$.
Thus, the addend density becomes
\begin{align*}
f_{\tilde{X}}(x)&= \frac{\mathbb{E}_{x,T} \{f_B\left(k{-}1;n{-}1,F_S(x{+}t)\right)\} f_{X}(x)} {\mathbb{E}_S \{f_B\left(k{-}1;n{-}1,F_S(s)\right)\}} \nonumber \\ \nonumber
&= n {n{-}1 \choose k{-}1} \mathbb{E}_{x,T} \{F_S(x{+}t)^{k-1} (1{-}F_S(x{+}t))^{n-k}\} f_{X}(x) \\
&=\mathcal{K}_{n,k} f_{X}(x) \mathbb{E}_{T} \{F_S(x{+}t)^{k-1} (1{-}F_S(x{+}t))^{n-k}\}.
\end{align*}
\end{proof}
The proof of Proposition 2 is nearly identical to that of Proposition 1 and is thus omitted.

Note that by setting $m=1$ in either \eqref{genederi} or \eqref{genederip}, we recover the well-known density of the $k^{\text{th}}$ order statistic of random variable $X$ (see, \emph{e.g.}, \cite{david})
\begin{align}\label{wellknown}
f_{X_{(k{,}n)}}(x)= \frac{n!}{(k-1)!(n-k)!} \left(F_X(x)\right)^{k-1} \left(1{-}F_X(x)\right)^{n-k} f_{X}(x).
\end{align}

Similar expressions can be found for more general cases, where the sums or products are not identically distributed, or where not all sums or products comprise the same number of addends or factors. However, in this article, we concentrate only on the symmetric case, where all $n$ sums (resp. products) have exactly $m$ addends (resp. factors). Furthermore, we only demonstrate examples where all the addends or factors are \alert{independent} random variables.

In the remainder of this section, we use Propositions $1$ and $2$ to derive closed-form expressions for widely-used parent distributions. In the examples, all the sums and products have two addends or factors ($m=2$), and there are two sums or products ($n=2$).

With $n=2, k=2,$ and $m=2$, Propositions $1$ and $2$ yield

\begin{align}
\label{addsim}
f_{\tilde{X}}(x) &= 2f_{X}(x) \mathbb{E}_{T} \{F_S(x{+}t)\}
\end{align}
and
\begin{align}
\label{facsim}
f_{\tilde{X}}(x) &= 2 f_{X}(x) \mathbb{E}_{V} \{F_U(xv)\},
\end{align}
respectively.

\subsection{Addend Distributions -- Examples}\label{sec:addex}
\subsubsection{Uniform Addend Distribution}
Let $X \sim \text{U}(a,b)$. Thus, $f_X(x)=1,\text{ when }a\leq{x}\leq b$. Let $S$ follow the distribution of the sum of two standard uniform variables. It is well-known that $S$ follows the triangular distribution, which can be verified by convolution: $f_S(x) = \int_{-\infty}^{\infty}f_X(\tau)f_X(x{-}\tau)\mathop{d\tau}$. \alert{Now by using \eqref{addsim}, the uniform addend density yields
\begin{align*}
f_{\tilde{X}}(x)&=
\frac{-2}{3 (a-b)^4}x^3 +  \frac{a+b}{(a-b)^4}x^2 + \frac{a^2-4 a b+b^2}{(a-b)^4}x \\
&\quad + \frac{-5 a^3+12 a^2 b-6 a b^2+b^3}{3 (a-b)^4}, \quad a \leq x \leq b
,
\end{align*}
which is plotted in Fig. \ref{uniformmax22sumfig} for $a=0$ and $b=1$.
}

\subsubsection{Normal Addend Distribution}
The normal pdf is $f_X(x)=(2\pi\sigma^2)^{-\frac{1}{2}} \exp{\left({-\frac{(\mu-x)^2}{2 \sigma^2}}\right)}$, and the sum of two independent normally distributed random variables has distribution $\mathcal{N}(2\mu,2\sigma^2)$. Moreover, if $S=X_{1,1}+X_{1,2}\sim\mathcal{N}(2\mu,2\sigma^2) $ and $T=X_{2,2}\sim\mathcal{N}(\mu,\sigma^2)$, then $W=S-T=X_{1,1}+X_{1,2}-X_{2,2}\sim\mathcal{N}(\mu,3\sigma^2)$.
We can therefore write \[ \mathbb{E}_{T} \{F_S(x{+}t)\}=\mathbb{P}(S<x+T)=\mathbb{P}(W<x).\]
The cumulative distribution function $W$ is $F_W(x)=\frac{1}{2}\left(1+\erf\left(\frac{x-\mu}{{{\sqrt{6}}\sigma}}\right)\right)$, where $$\erf(x)=\frac{2}{\sqrt{\pi}}\int_{0}^x\exp{\left(-t^2\right)}\mathop{dt}$$ is the error function.

The normal addend density thus becomes
\begin{align}\label{normaddpdf}
f_{\tilde{X}}(x)
&=({2 \pi } \sigma^2)^{-\frac{1}{2}}\exp{\left(-\frac{(\mu-x)^2}{2 \sigma^2}\right)} \erfc\left(\frac{\mu-x}{\sqrt{6\sigma^2}}\right),
\end{align}
where $\erfc(x)=1-\erf(x)$ is the complementary error function. The density \eqref{normaddpdf} is plotted in Fig. \ref{normmax22sumfig} for the standard normal parent distribution.

\subsubsection{Exponential Addend Distribution}
It is well-known that the sum of two exponential random variables, with densities $f_X(x)=\lambda \exp{\left(-\lambda x\right)}$ each, follows the Erlang distribution with density $f_S(x)=\lambda x \exp{\left(-\lambda x\right)}$. Thus, the corresponding cumulative distribution function is $F_{\text{S}}(x)=1-\lambda x\exp{\left(-\lambda x\right)} - \exp{\left(-\lambda x\right)}.$

\alert{
The exponential addend density becomes
\begin{align}
f_{\tilde{X}}(x)
&=\lambda \exp{\left(-\lambda x\right)} \left(2-\exp{\left(-\lambda x\right)} \left(\lambda x+\frac{3}{2}\right)\right),
\end{align}
which is plotted in Fig. \ref{expmax22sumfig} for $\lambda=1$.
}

\subsubsection{Rayleigh Addend Distribution}
The Rayleigh pdf is
\begin{align}
\label{Rayleigh}
f(x)=\frac{2x}{\sigma}\exp{\left(-\frac{x^2}{\sigma}\right)}
\end{align}
and the density of the sum of two iid Rayleigh random variables can be expressed as \cite{PD19}[{\bf Stuber}]
\begin{align}
\label{Rayleigh_Sum}
f_S(x)=\frac x\sigma \exp{\left(\frac{-x^2}{\sigma}\right)}+\sqrt{\frac{\pi}{2\sigma}}\exp{\left(-\frac{x^2}{2\sigma}\right)}\erf{\left(\frac{x}{\sqrt{2\sigma}}\right)}\left(\frac{x^2}{\sigma}-1\right).
\end{align}
Therefore, following (\ref{addsim}), we can write the Rayleigh addend density as
\begin{align}
\label{Rayleigh_addend}
f_{\tilde{X}}(x)=\frac{4x}{\sigma}\exp{\left(-\frac{x^2}{\sigma}\right)} \int_0^\infty \int_0^{x+t} 
f_S(z) f(t){\rm d} z {\rm d} t.
\end{align}
Now we use (\ref{Rayleigh}) and (\ref{Rayleigh_Sum}) in (\ref{Rayleigh_addend}) with some algebraic manipulation to obtain the Rayleigh addend density
\begin{align*}
f_{\tilde{X}}(x)=
\frac{4x}{\sigma}\exp{\left(-\frac{x^2}{\sigma}\right)} h(x)
\end{align*}
where
\begin{align}\label{hjuttu}
h(x)=\frac12-\frac12\mathcal{I}(x)+\sqrt{\frac\pi{2\sigma}}\mathcal{J}(x)
\end{align}
in which
\begin{align*}
\mathcal{I}(x)=\frac12\exp\left(-\frac{x^2}{\sigma}\right)-x\exp\left(-\frac{x^2}{2\sigma}\right)\sqrt{\frac{\pi}{2\sigma}}Q\left(\frac{x}{\sqrt{\sigma}}\right)
\end{align*}
where
\begin{align*}
Q\left(x\right) = \frac{1}{\sqrt{2\pi}}\int_x^\infty \exp\left(-\frac{u^2}{2}\right)\mathop{du}
\end{align*}
is the Q-function and
\begin{align*}
\mathcal{J}(x)=\int_0^\infty \int_0^{x+t}
\frac{2t}{\sigma} \exp\left(-\frac{t^2}{\sigma}-\frac{z^2}{2\sigma}\right)
\left(\frac{z^2}{\sigma}-1\right) \text{erf}\left(\frac{z}{\sqrt{2\sigma}}\right) {\rm d}z {\rm d}t.
\end{align*}
Changing the order of integration and changing to polar coordinates, the integrations may be performed, leading to
\begin{align*}
\mathcal{J}(x)=\sqrt{\frac{\sigma}{2\pi}}-\frac{4x}{9}\exp\left(-\frac{x^2}{2\sigma}\right)+\frac{23x}{18}\exp\left(-\frac{x^2}{2\sigma}\right)Q\left(\frac{x}{\sqrt{\sigma}}\right)-\frac{5}{6}\sqrt{\frac{\sigma}{2\pi}}\exp\left(-\frac{x^2}{\sigma}\right)\\+\sqrt{\frac{2\sigma}{3\pi}}\left(\frac{4x^2}{9\sigma}-\frac{2}{3}\right)\exp\left(-\frac{x^2}{3\sigma}\right)\phi(x)
\end{align*}
 where
 \begin{align*}
 \phi(x)=\pi Q\left(\frac{x}{\sqrt{3\sigma}}\right)-2\int_0^{\frac{\pi}{6}}\exp\left(-\frac{x^2}{6\sigma \sin^2 \theta}\right)\mathop{d\theta}.
 \end{align*}
While the above single integral does not have a closed form solution, it can be evaluated numerically. The density is plotted in Fig. \ref{raymax22sumfig} for $\sigma=1$.

\subsection{Factor Distributions -- Examples}
\subsubsection{Uniform Factor Distribution}

The standard uniform pdf is $f_X(x)=1, \text{when } 0\leq x \leq 1$. The density of the product of two standard uniform random variables is $f_U(x)=-\log x$ \cite{weissuni}, where $\log(\cdot)$ is the natural logarithm. This yields cumulative distribution function $F_U(x)=x-x\log x$.

The uniform factor density becomes
\begin{align*}
f_{\tilde{X}}(x)
=2\int_{0}^{1}\left(-xz\log{xz}+xz\right)\mathop{dz}=-x\log{x}+\frac{3}{2}x,
\end{align*}
which is plotted in Fig. \ref{uniformmax22prodfig}.

\subsubsection{Normal Factor Distribution}
For simplicity, here we only focus on $\mathcal{N}(0,\sigma^2)$, the pdf of which is $\phi(x)=(2\pi\sigma^2)^{-\frac{1}{2}} \exp{\left({-\frac{x^2}{2 \sigma^2}}\right)}$. The product of two iid normal random variables that follow $\mathcal{N}(0,\sigma^2)$ is $f_V(x)=\frac{1}{\pi \sigma^2}K_0\left(\frac{\abs {x}}{\sigma^2}\right)$ \cite{weiss}, where $K_0(\cdot)$ is a modified Bessel function of the second kind of order 0. The normal factor density thus becomes
\begin{align*}
f_{\tilde{X}}(x) &= 2f_X(x) \mathbb{E}_{V} \{F_U(xv)\}=
 2\phi(x)\int_{-\infty}^\infty \int_{-\infty}^{xv} \frac{1}{\pi \sigma^2}K_0\left(\frac{\abs {z}}{\sigma^2}\right)\mathop{dz}\phi(v)\mathop{dv}=
\phi{(x)},
\end{align*}
which is plotted in Fig. \ref{normmax22prodfig}. A detailed proof is shown in Appendix \ref{norfac}.

\subsubsection{Exponential Factor Distribution}\label{sec:expprod}
The exponential distribution with rate $\lambda$ has density $f_X(x)=\lambda \exp{(-\lambda x)}$. If $X$ and $Y$ are exponential with the same rate, and $x,y\in\mathbb{R}$, then it is well-known that $\mathbb{P}(x\cdot X<y\cdot Y)=\frac{x}{x + y}$.
From this observation, and by using the substitution $t=\lambda(x+y)$, the exponential factor density yields
\begin{align}\label{expprodpdf}
f_{\tilde{X}}(x)
&= 2\lambda \exp{(-\lambda x)} \mathbb{P}(X_{1,1}X_{1,2} <X_{2,1}\cdot x)\nonumber\\
&=2\lambda^2x E_1(\lambda x),
\end{align}
where $E_1(\cdot)$ is the exponential integral defined as $E_1(x) = \int_{x}^{\infty}\exp(-t)/t\mathop{dt}$. The density \eqref{expprodpdf} is plotted in Fig. \ref{expmax22prodfig} for $\lambda=1$.

\subsubsection{Rayleigh Factor Distribution}
The Rayleigh distribution as defined earlier is related to the exponential distribution so that if $Y$ is exponential with parameter $\lambda$, then $X=\sqrt{Y\sigma\lambda}$ is Rayleigh with parameter $\sigma$. Thus, utilizing \eqref{expprodpdf} and the fact that $F'(ax^2)=2axf(ax^2)$, we have
\begin{align*}
f_{\tilde{X}}(x)={\frac{{4x^3}}{\sigma^2}} E_1\left(\frac{x^2}{\sigma}\right),
\end{align*}
which is plotted in Fig. \ref{rayleighmax22prodfig} for $\sigma=1$.

Table \ref{tabrefactors} summarizes the addend and factor distributions derived in the last two sections.
\begin{table}[!h]
\begin{center}
    \begin{tabular}{| c | c | c | c | c |}
    \hline
    Distribution &
    pdf of addend of max. sum $f_{\tilde{X}}$ &
    pdf of factor of max. product $f_{\tilde{X}}$\\
    \hline
    General &
    $2f_X(x) \mathbb{E}_{T} \{F_S(x+t)\}$ &
    $2f_X(x) \mathbb{E}_{V} \{F_U(xv)\}$\\
    \hline
    
    \text{U}(a,b) &
    $\frac{-2}{3 (a-b)^4}x^3 +  \frac{a+b}{(a-b)^4}x^2 + \frac{a^2-4 a b+b^2}{(a-b)^4}x + \frac{-5 a^3+12 a^2 b-6 a b^2+b^3}{3 (a-b)^4}$ &
    $-x\log{x}+\frac{3}{2}x\quad (a=0$, $b=1)$\\
    \hline
    
    $\mathcal{N}(\mu,\sigma^2)$ &
    $({2 \pi } \sigma^2)^{-\frac{1}{2}}\exp{\left(-\frac{(\mu-x)^2}{2 \sigma^2}\right)} \erfc\left(\frac{\mu-x}{\sqrt{6\sigma^2}}\right)$ &
    $(2\pi\sigma^2)^{-\frac{1}{2}} \exp{\left(-\frac{x^2}{2\sigma^2}\right)}\quad (\mu=0)$\\
    \hline
    
    Exp($\lambda$) &
    $\lambda \exp{(-\lambda x)} \left(2-\exp{(-\lambda x)} (\lambda x+\frac{3}{2})\right)$ &
    $2\lambda^2xE_1(\lambda x)$\\
    \hline
    
    Rayleigh($\sigma$) &
    $\frac{4x}{\sigma}\exp{\left(-\frac{x^2}{\sigma}\right)}h(x)$ &
    $\frac{{x^3}}{\sigma^{4}} E_1\left(\frac{x^2}{2 \sigma^2}\right)$\\
    \hline
    \end{tabular}
\end{center}
\caption{Addend and factor densities of the general form for iid random variables for select distributions  with $m=n=k=2$. Here $h(x)$ is defined as in \eqref{hjuttu}.}
\label{tabrefactors}
\end{table}

Figures \ref{addendfigs} and \ref{factordistrs} plot the densities of Table \ref{tabrefactors} with corresponding numerical histograms.
\begin{figure*}
\centering
\begin{subfigure}[b]{0.475\textwidth}
\centering
\includegraphics[width=\textwidth]{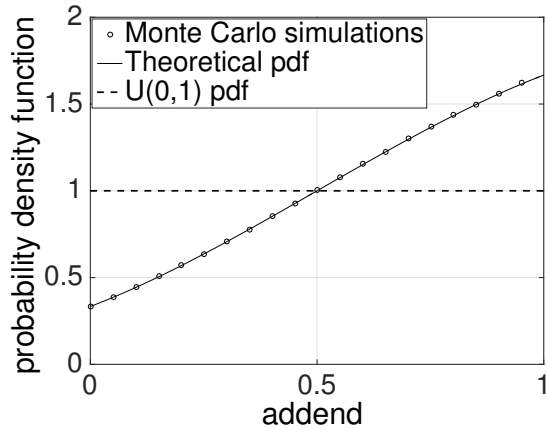}
\caption[]%
            {{\small Density of an addend of the maximum of two sums of two standard uniform $\text{U}(0,1)$ variables.}}    
\label{uniformmax22sumfig}
\end{subfigure}%
\hfill
\begin{subfigure}[b]{0.475\textwidth}  
\centering
\includegraphics[width=\textwidth]{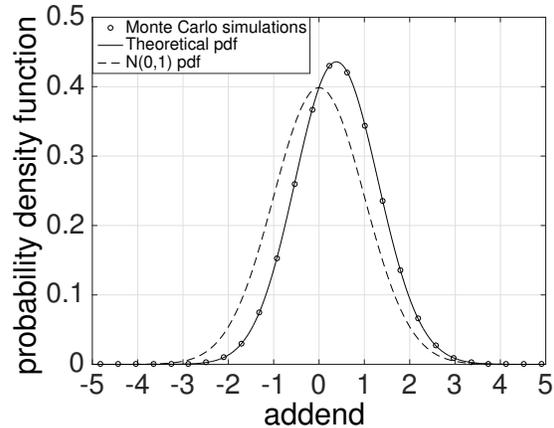}
\caption[]%
            {{\small Density of an addend of the max. of two sums of two standard normal $\mathcal{N}(0,1)$ variables.}}    
\label{normmax22sumfig}
\end{subfigure}
\vskip\baselineskip
\begin{subfigure}[b]{0.475\textwidth}  
\centering
\includegraphics[width=\textwidth]{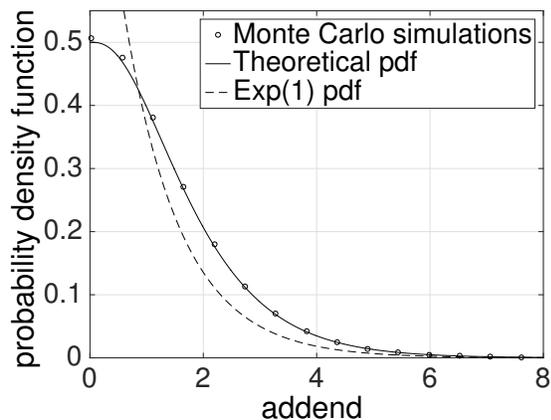}
\caption[]%
            {{\small Density of an addend of the maximum of two sums of two standard exponential Exp(1) variables.}}    
\label{expmax22sumfig}
\end{subfigure}
\quad
\begin{subfigure}[b]{0.475\textwidth}   
\centering
\includegraphics[width=\textwidth]{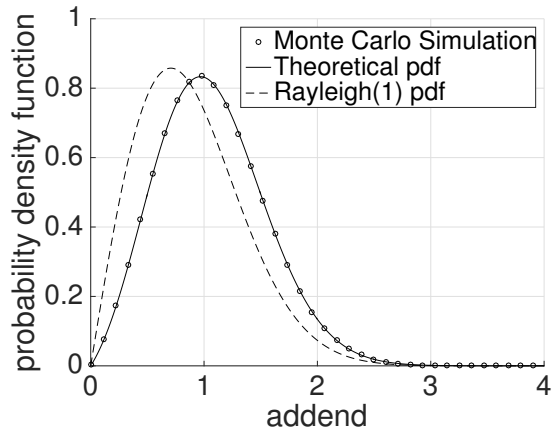}
\caption[]%
            {{\small Density of an addend of the maximum of two sums of two standard Rayleigh(1) variables.}}    
\label{raymax22sumfig}
\end{subfigure}

\caption[  ]
        {\small Select addend distributions with numerical verifications.} 
\label{addendfigs}
\end{figure*}

    \begin{figure*}
        \centering
        \begin{subfigure}[b]{0.475\textwidth}
            \centering
            \includegraphics[width=\textwidth]{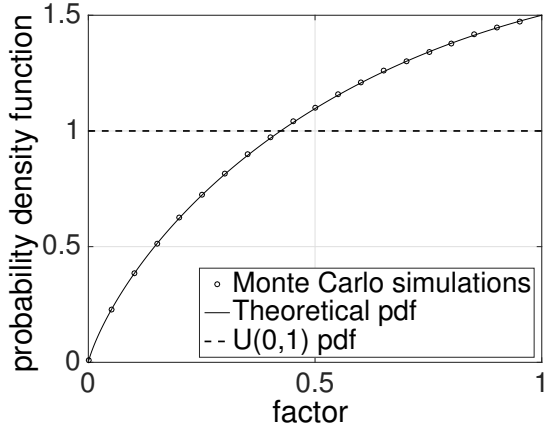}
            \caption[]%
            {{\small Density of a factor of the maximum of two products of two standard uniform $\text{U}(0,1)$ variables.}}    
            \label{uniformmax22prodfig}
        \end{subfigure}
        \hfill
        \begin{subfigure}[b]{0.475\textwidth}  
            \centering 
            \includegraphics[width=\textwidth]{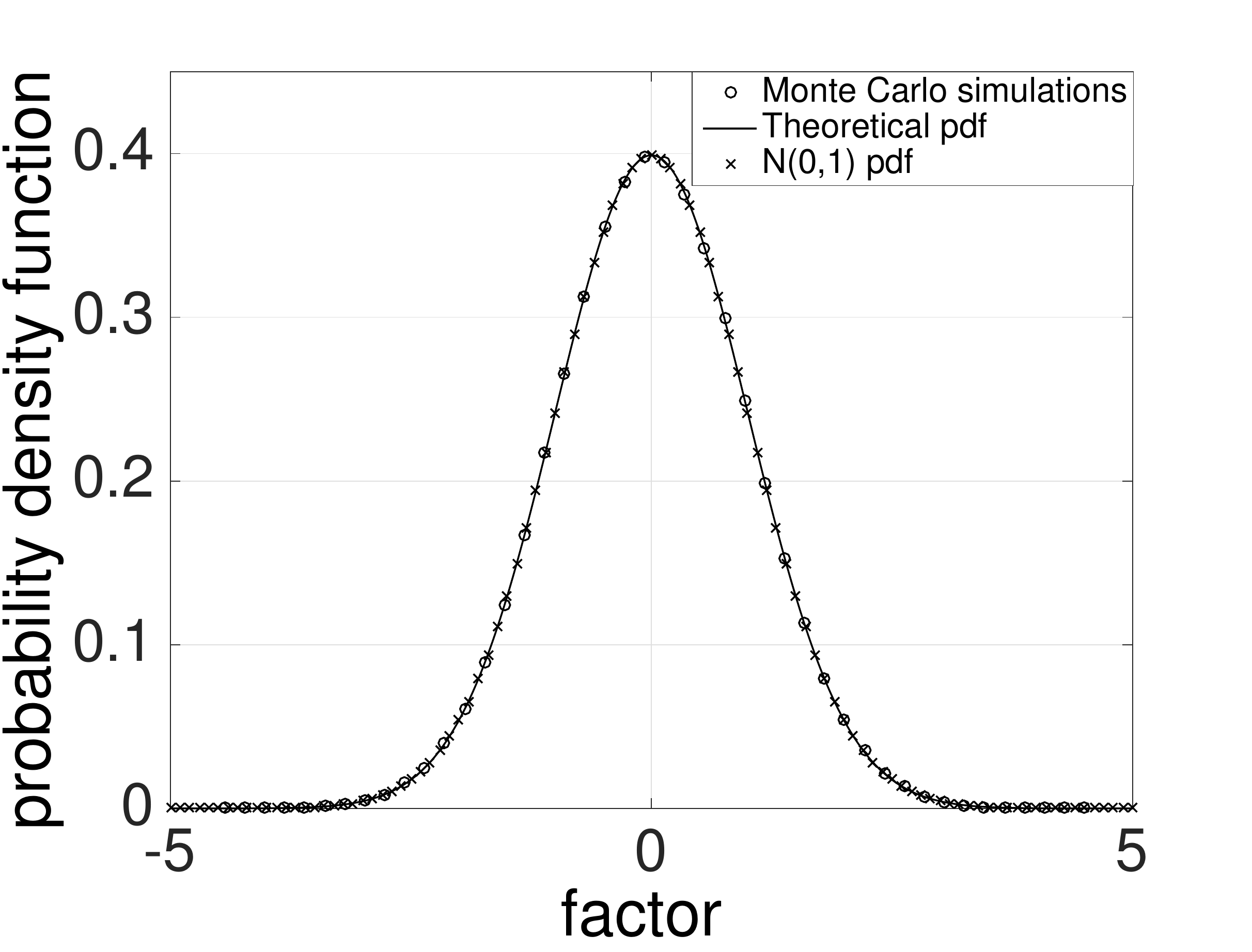}
            \caption[]%
            {{\small Density of a factor of the maximum of two products of two standard normal $\mathcal{N}(0,1)$ variables.}}    
            \label{normmax22prodfig}
        \end{subfigure}
        \vskip\baselineskip
        \begin{subfigure}[b]{0.475\textwidth}   
            \centering 
            \includegraphics[width=\textwidth]{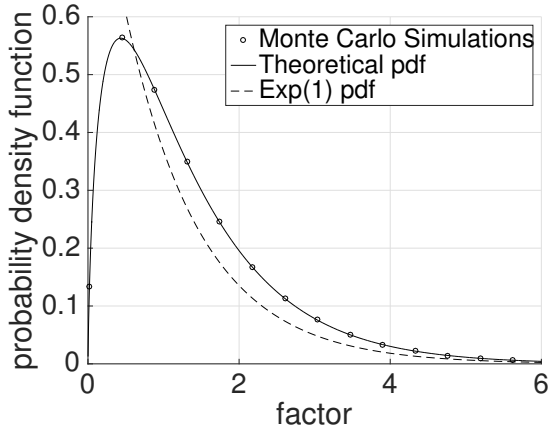}
            \caption[]%
            {{\small Density of a factor of the maximum of two products of two standard exponential Exp(1) variables.}}    
            \label{expmax22prodfig}
        \end{subfigure}
        \quad
        \begin{subfigure}[b]{0.475\textwidth}   
            \centering 
            \includegraphics[width=\textwidth]{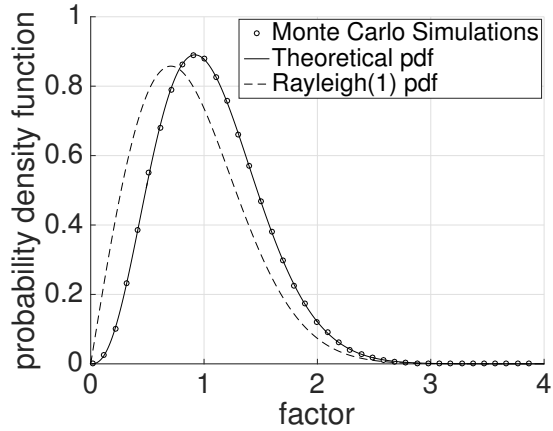}
            \caption[]%
            {{\small Density of a factor of the maximum of two products of two standard Rayleigh(1) variables.}}    
            \label{rayleighmax22prodfig}
        \end{subfigure}
        \caption[  ]
        {\small Select factor distributions with numerical verifications.} 
        \label{factordistrs}
    \end{figure*}
\newpage
\section{Asymptotic Behavior}\label{asympsec}
In general, it appears difficult to characterize the impact of $n$ and $k$ on the addend or factor distribution.  To circumvent this difficulty, here we focus on the asymptotic regime. In particular, we investigate the double scaling limit $n,k \to \infty$ such that the ratio $k/n$ remains fixed. Since the direct evaluation of such limits using either
(\ref{eq_prop1}) or (\ref{eq_prop2}) seems an arduous task, in what follows, we adopt stochastic convergence concepts detailed in \cite{Chung}.

\newpage
Before proceeding, a few notations are in order. \alert{Let \alert{$\{X_1,X_2,X_3,\dots,X_n$\}} denote \alert{independent} random variables with \alert{densities $\{f_{X_1},f_{X_2},\dots,f_{X_n}\}$, respectively. We are interested in finding the limiting distribution of $\tilde{X}=X_1$.}} Let $g_m(\cdot)$ denote the density of $T_1=\sum_{j=1}^{m}X_j$ and, similarly, $g_{m-1}(\cdot)$ denote the density of $T_2=\sum_{j=2}^{m}X_j$, while $G_m(\cdot)$ denotes the corresponding cumulative distribution function and $G_m^{-1}(\cdot)$ its inverse. Now we have the following key result: 

\newpage
\begin{prop}\label{dsl}
Let $f_X$ be bounded and have an infinite support\footnote{This particular condition can easily be relaxed to accommodate single-sided probability density functions.}. Then as $n\to\infty$ such that $k/n\to \eta$ with $0<\eta<1$, \alert{the asymptotic addend distribution becomes}
\begin{align}
\lim\limits_{k/n=\eta, n\to\infty} f_{\tilde{X}}(x) = \frac{g_{m{-}1}\left(\beta{-}x\right)}{g_{m}\left(\beta\right)}f_X(x),\end{align}
\end{prop}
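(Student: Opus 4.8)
The plan is to begin from the exact expression \eqref{eq_prop1} of Proposition \ref{genederi} and recast the expectation over $T$ as an expectation against the density of an order statistic of the \emph{sum}, for which the double-scaling limit is classical. Throughout, write $\beta = G_m^{-1}(\eta)$ for the population $\eta$-quantile of $S=T_1$, and recall that $F_S = G_m$, that $g_m$ is the density of $S$, and that $T=\sum_{i=2}^m X_i$ has density $g_{m-1}$.

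First I would perform the change of variables $s = x+t$ in the expectation in \eqref{eq_prop1}, giving
\begin{align*}
f_{\tilde{X}}(x) = f_X(x)\int_{-\infty}^{\infty} g_{m-1}(s-x)\,\mathcal{K}_{n,k}\,G_m(s)^{k-1}\bigl(1-G_m(s)\bigr)^{n-k}\,ds.
\end{align*}
Then I would multiply and divide the integrand by $g_m(s)$ and recognize, via \eqref{wellknown} applied with parent distribution $G_m$, that
\begin{align*}
\mathcal{K}_{n,k}\,G_m(s)^{k-1}\bigl(1-G_m(s)\bigr)^{n-k}\,g_m(s) = f_{S_{(k,n)}}(s)
\end{align*}
is exactly the density of the $k^{\text{th}}$ order statistic of $n$ i.i.d.\ copies of $S$. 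Hence
\begin{align*}
f_{\tilde{X}}(x) = f_X(x)\,\mathbb{E}\!\left[\frac{g_{m-1}\bigl(S_{(k,n)}-x\bigr)}{g_m\bigl(S_{(k,n)}\bigr)}\right],
\end{align*}
which reduces the problem to the asymptotics of a single expectation of the fixed function $\psi(s)=g_{m-1}(s-x)/g_m(s)$ of the sample quantile $S_{(k,n)}$.

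The core analytic input is the classical convergence of the sample quantile: under the stated hypotheses ($f_X$ bounded with infinite support forces $g_m$ to be continuous and strictly positive near $\beta$, so $G_m$ is continuous and strictly increasing there), the stochastic convergence results of \cite{Chung} give $S_{(k,n)}\to\beta$ almost surely as $n\to\infty$ with $k/n\to\eta$. Since $\psi$ is continuous at $s=\beta$ with $\psi(\beta)=g_{m-1}(\beta-x)/g_m(\beta)$, the continuous mapping theorem yields $\psi(S_{(k,n)})\to\psi(\beta)$ almost surely, so that $f_{\tilde{X}}(x)\to f_X(x)\,\psi(\beta)$, which is the claimed limit, provided the expectation and the limit may be interchanged.

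I expect that justifying this interchange will be the main obstacle. The function $\psi$ need not be globally bounded, because $g_m(s)$ can be arbitrarily small in the tails, so a naive dominated-convergence argument fails. The plan is to exploit the sharp concentration of $f_{S_{(k,n)}}$: a Chernoff/Laplace estimate on $G_m(s)^{k-1}\bigl(1-G_m(s)\bigr)^{n-k}$ shows that the mass outside any fixed neighbourhood of $\beta$ decays exponentially in $n$, so the contributions from the region where $\psi$ is large become negligible, while on a fixed compact neighbourhood of $\beta$ the integrand is bounded and the elementary convergence above applies. Splitting the integral at such a neighbourhood, and using the continuity and positivity of $g_m$ at $\beta$ to bound $\psi$ on the compact piece, should supply the required uniform integrability and complete the argument.
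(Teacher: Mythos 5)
Your proposal is correct, and its skeleton coincides with the paper's own proof: both start from \eqref{eq_prop1}, substitute $s=x+t$, multiply and divide by $g_m(s)$, and recognize the resulting measure as one that concentrates at the quantile $\beta=G_m^{-1}(\eta)$. The difference lies in how the limit is then taken. The paper performs the further substitution $z=G_m(s)$, so that the expectation becomes an integral of $\psi(G_m^{-1}(z))$ against a $\mathrm{Beta}(k,n-k+1)$ law; it shows via characteristic functions (${}_1F_1(k;n+1;-\mathrm{j}\omega)\to e^{\mathrm{j}\omega\eta}$) that this law converges weakly to a point mass at $\eta$, and then invokes Chung's theorems for \emph{bounded} continuous test functions, simply asserting that $g_{m-1}(G_m^{-1}(z)-x)/g_m(G_m^{-1}(z))$ is bounded on $(0,1)$. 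You instead keep the order statistic $S_{(k,n)}$ itself (your identification of the kernel with $f_{S_{(k,n)}}$ via \eqref{wellknown} is exactly right), use quantile consistency plus the continuous mapping theorem, and—crucially—you refuse to assume boundedness of $\psi(s)=g_{m-1}(s-x)/g_m(s)$, supplying instead a tail-concentration/uniform-integrability argument. This is a genuine improvement at precisely the point where the paper hand-waves: the asserted boundedness corresponds to controlling $g_{m-1}(s-x)/g_m(s)$ as $s\to\pm\infty$, which is not automatic for a general parent density. Your splitting argument in fact closes cleanly, and more easily than you anticipate: in the tail integral the factor $g_m(s)$ cancels, leaving $\mathcal{K}_{n,k}\int_{|s-\beta|>\delta}G_m(s)^{k-1}(1-G_m(s))^{n-k}g_{m-1}(s-x)\,ds$, which is at most $\mathcal{K}_{n,k}\sup_{|s-\beta|>\delta}G_m(s)^{k-1}(1-G_m(s))^{n-k}$ times $\int g_{m-1}=1$, and this supremum times $\mathcal{K}_{n,k}$ decays exponentially by the standard Chernoff bound since $G_m(\beta\pm\delta)\neq\eta$; on the compact piece $\psi$ is bounded by continuity and positivity of $g_m$ at $\beta$. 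Two minor points: almost-sure convergence of $S_{(k,n)}$ presumes a nested sampling scheme across $n$, but convergence in probability (which holds for any triangular array) is all your argument needs; and strict positivity of $g_m$ near $\beta$ is an implicit hypothesis of the proposition rather than a consequence of boundedness and infinite support, exactly as it is implicit in the paper.
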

where $\beta = G_m^{-1}(\eta)$.

The proof is shown in Appendix \ref{dslproof}.

As a particular example, for standard normal with $m=2$, we obtain
\begin{align*}
g_1(x) = \phi(x) = (2\pi)^{-\frac{1}{2}} \exp{\left(-\frac{1}{2}x^2\right)},\quad
g_2(x) = (4\pi)^{-\frac{1}{2}} \exp{\left(-\frac{1}{4}x^2\right)},
\end{align*}
which yields
\begin{align*}
\lim\limits_{k/n=\eta, n\to\infty} f_{\tilde{X}}(x) = \pi^{-\frac{1}{2}} \exp{\left(-\left(x-\frac{1}{2}\beta\right)^2\right)},
\end{align*}
where $\beta = \Phi_2^{-1}(\eta)$ is the inverse of the cumulative distribution function of $\mathcal{N}(0,2)$ evaluated at $\eta=k/n$. This implies that the limiting distribution is a normal distribution, namely $\mathcal{N}(\frac{1}{2}\beta,\frac{1}{2})$. 

To preserve the notation for the case of factor distribution, let $g_m(\cdot)$ denote the density of $V_1=\prod_{j=1}^{m}X_j$ and, similarly, $g_{m-1}(\cdot)$ denote the density of $V_2=\prod_{j=2}^{m}X_j$, while $G_m(\cdot)$ denotes the corresponding cumulative distribution function and $G_m^{-1}(\cdot)$ its inverse.

The double scaling limit of the factor distribution is given by the following result.\\
\begin{prop}\label{dslp}
Let $f_X$ be bounded. Then as $n\to\infty$ such that $k/n\to \eta$ with $0<\eta<1$, \alert{the asymptotic factor distribution becomes}
\begin{align}\label{afd}
\lim\limits_{k/n=\eta, n\to\infty} f_{\tilde{X}}(x) = \frac{g_{m{-}1}\left(\beta/x\right)}{g_{m}\left(\beta\right)|x|}f_X(x),
\end{align}
\end{prop}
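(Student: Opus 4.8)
The plan is to follow the argument behind the addend case (Proposition \ref{dsl}, proved in Appendix \ref{dslproof}) and simply replace the additive structure $x+t$ by the multiplicative structure $xv$; the only genuinely new ingredient is the Jacobian $1/|x|$ produced by the change of variables $u=xv$, which accounts for the $|x|$ in the denominator of \eqref{afd}. Throughout, recall that here $g_m=f_U$ is the density of the full product $V_1=\prod_{j=1}^{m}X_j$ with CDF $G_m=F_U$, while $g_{m-1}$ is the density of $V_2=\prod_{j=2}^{m}X_j=V$, and $\beta=G_m^{-1}(\eta)$.

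First I would rewrite the exact factor density of Proposition \ref{genederip} as an explicit integral,
\[
f_{\tilde{X}}(x)=\mathcal{K}_{n,k}\,f_X(x)\int F_U(xv)^{k-1}\bigl(1-F_U(xv)\bigr)^{n-k}\,g_{m-1}(v)\,dv,
\]
and apply the substitution $u=xv$, which contributes the Jacobian factor $1/|x|$ (the absolute value accommodating negative $x$, as in the normal factor example). Recognizing, via the $k^{\text{th}}$ order-statistic density \eqref{wellknown} applied to the parent $U$, that $\mathcal{K}_{n,k}F_U(u)^{k-1}(1-F_U(u))^{n-k}=f_{U_{(k,n)}}(u)/g_m(u)$, this recasts the density in the compact expectation form
\[
f_{\tilde{X}}(x)=\frac{f_X(x)}{|x|}\,\mathbb{E}_{U_{(k,n)}}\left\{\frac{g_{m-1}\!\left(U_{(k,n)}/x\right)}{g_m\!\left(U_{(k,n)}\right)}\right\},
\]
where $U_{(k,n)}$ is the $k^{\text{th}}$ order statistic of $n$ iid copies of the product $U$.

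Next I would invoke the stochastic convergence of sample quantiles detailed in \cite{Chung}: under the double scaling $k/n\to\eta$, the order statistic $U_{(k,n)}$ concentrates in probability at the population $\eta$-quantile $\beta=G_m^{-1}(\eta)$ of the product distribution. Passing the limit through the expectation — by the continuous mapping theorem together with a dominated-convergence bound on the ratio $g_{m-1}(\,\cdot\,/x)/g_m(\cdot)$ in a neighbourhood of $\beta$ — collapses the expectation to the value of the integrand at $u=\beta$, giving
\[
\lim\limits_{k/n=\eta,\,n\to\infty}f_{\tilde{X}}(x)=\frac{f_X(x)}{|x|}\cdot\frac{g_{m-1}\!\left(\beta/x\right)}{g_m\!\left(\beta\right)},
\]
which is precisely \eqref{afd}.

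The main obstacle is rigorously justifying this interchange of limit and expectation as the order-statistic density $f_{U_{(k,n)}}$ sharpens into an approximate point mass at $\beta$. Besides the assumed boundedness of $f_X$, I would need $g_m(\beta)>0$ and continuity of $g_{m-1}$ and $g_m$ at the relevant arguments so that the integrand is continuous at $u=\beta$, together with a tail estimate ensuring that the portion of $f_{U_{(k,n)}}$ away from $\beta$ contributes negligibly. All such estimates are the exact multiplicative analogues of those in Appendix \ref{dslproof}, obtained by the formal replacement $x+t\mapsto xv$ and $\beta-x\mapsto\beta/x$ with the extra factor $1/|x|$; consequently the remainder of the argument carries over verbatim.
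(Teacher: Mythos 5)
Your proposal is correct and follows essentially the same route as the paper: rewrite Proposition \ref{genederip} as an explicit integral, substitute $u=xv$ to produce the Jacobian $1/|x|$, and then show that the remaining expectation concentrates at the $\eta$-quantile $\beta=G_m^{-1}(\eta)$ so that the integrand is evaluated at $u=\beta$. The only (cosmetic) difference is that the paper performs a further probability-integral transform $z=F_m(y)$ and proves concentration of the resulting $\mathrm{Beta}(k,n-k+1)$ variable via its characteristic function ${}_1F_1(k;n+1;-\mathrm{j}\omega)\to\exp(\mathrm{j}\omega\eta)$ together with \cite[Theorems 6.3.2 and 4.4.2]{Chung}, whereas you keep the expectation on the original scale as an average against the order-statistic density $f_{U_{(k,n)}}$ and cite sample-quantile convergence --- these are the same statement under the monotone map $F_U$, and your flagged continuity/boundedness/tail conditions match the assumptions the paper's own argument implicitly makes.
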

where $\beta=G_m^{-1}(\eta)$. 

The proof is shown in Section \ref{dslproof}. For standard normal with $m=2$, we obtain
\begin{align*}
g_1(x) = \phi(x) = (2\pi)^{-\frac{1}{2}} \exp{\left(-\frac{1}{2}x^2\right)},\quad
g_2(x) = \frac{1}{\pi}K_0(\abs x),
\end{align*}
which yields
\begin{align*}
\lim\limits_{k/n=\eta, n\to\infty} f_{\tilde{X}}(x;\beta) = \frac{1}{2\abs x K_0(|\beta|)}\exp{\left(-\frac{1}{2}(x^2+\beta^2/x^2)\right)},
\end{align*}
where $$\eta = \eta(\beta) \coloneqq \int_{-\infty}^{\beta}\frac{1}{\pi}K_0(\abs z)\mathop{dz}.$$

Knowing the double scaling limits provides an effective tool for approximating the corresponding addend and factor distributions. In Figures \ref{dslsum} and \ref{dslprod} we see that the approximations hold for relatively small values of $k$ and $n$ when the parent distribution is standard normal.

\begin{figure}[!h]
    \centering
    \begin{subfigure}[h]{0.5\textwidth}
        \centering
        \includegraphics[height=2.6in]{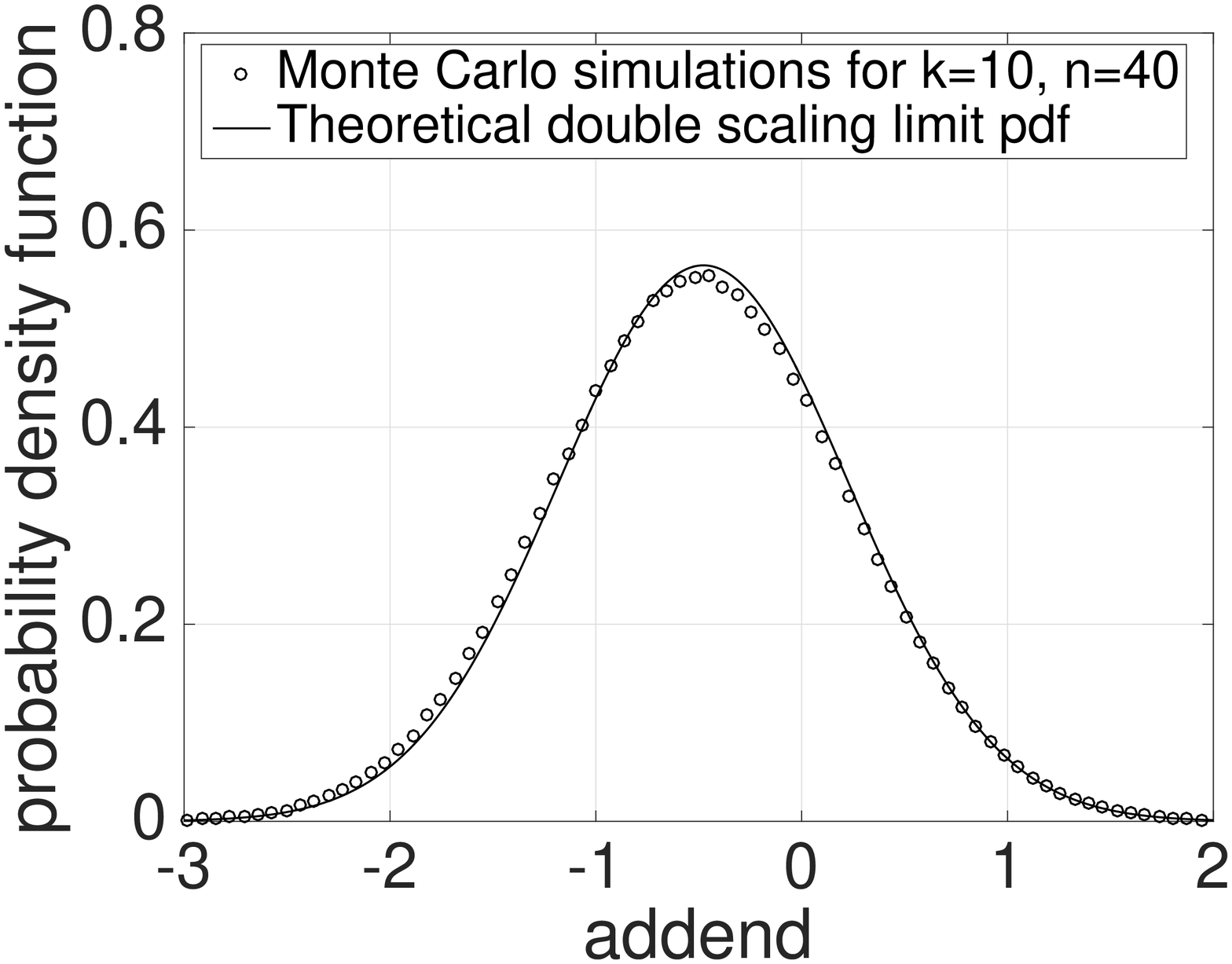}
\caption{Double scaling limit of the addend distribution.}
\label{dslsum}
    \end{subfigure}%
    ~ 
    \begin{subfigure}[h]{0.5\textwidth}
        \centering
        \includegraphics[height=2.6in]{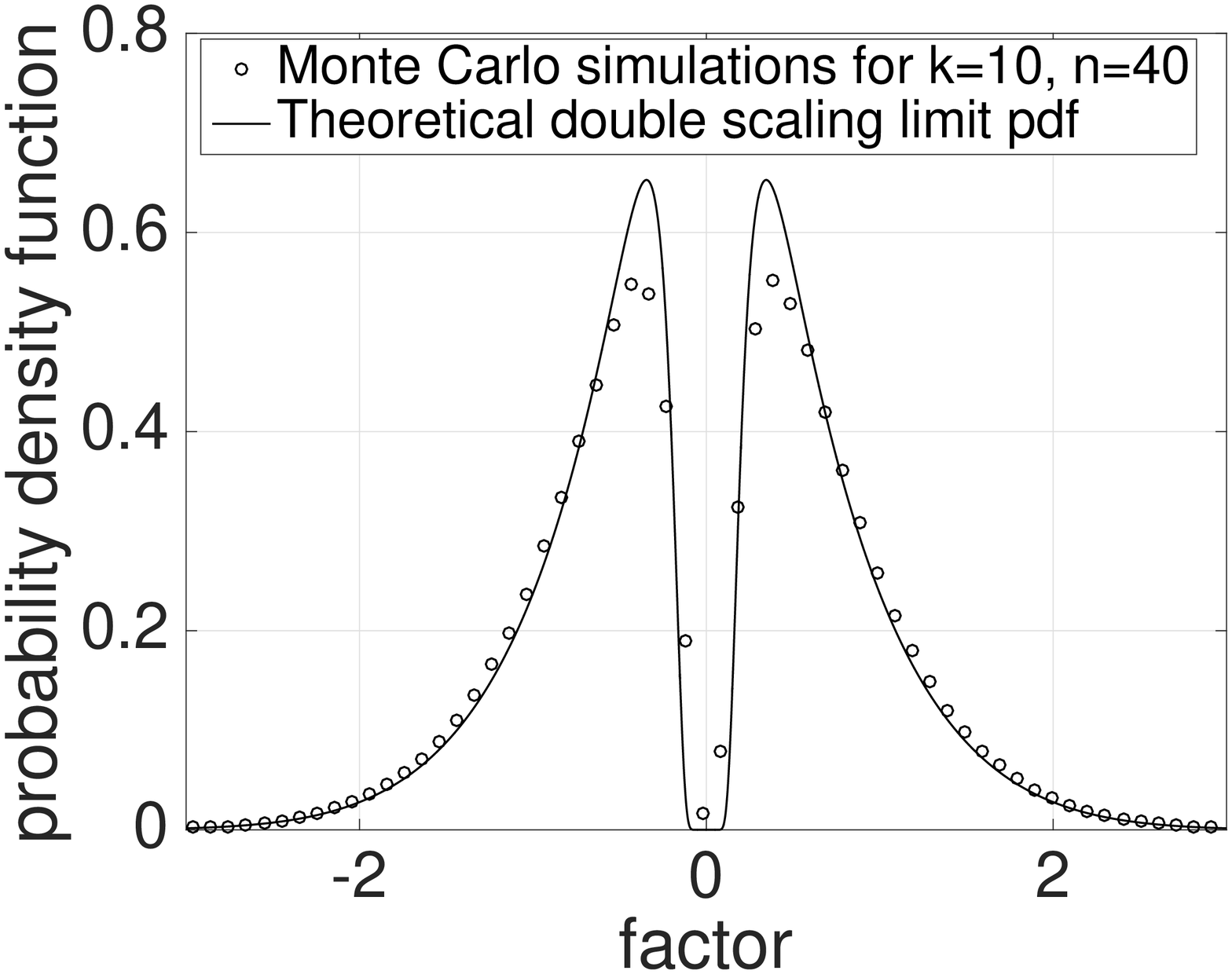}
\caption{Double scaling limit of the factor distribution.}
\label{dslprod}
    \end{subfigure}
\caption{Examples of the double scaling limit densities of the addend and factor distributions when the parent distribution is $\mathcal{N}(0,1)$. Here $k=10$ and $n=40$. The double scaling limit density with $\eta=10/40=0.25$ approximates the numerical results with relatively good accuracy.}
\label{dsls}
\end{figure}

\section{Examples of Required Cache Sizes}\label{requcache}
Here we return to the original problem of cache sizing and solve two example problems with the newly obtained mathematical tools.

\subsection{Size of the Most Important File}
Consider a file catalogue with $n$ files, the sizes of which are sampled from Gamma$(2,1)$ and the popularities of which are sampled from the standard uniform distribution $\text{U}(0,1)$. The file size is measured in bits and the popularity is measured in requests per second. We are interested in the distribution of the size of the file that generates most traffic, \emph{i.e.}, the file for which the product of its size and popularity, in bits per second, is largest. In other words, we have a case of the factor distribution with $k=n$. Now $m=2$ and the factors are independently but not identically distributed.


It has been shown that the product of Gamma$(2,1)$, with density $f_X(x)=x\exp{(-x)}$, and $\text{U}(0,1)$ distributed random variables follows Exp$(1)$ \cite{johnson}, the cumulative distribution function of which is $F_U(x)=1-\exp(-x)$. \alert{Thus, we now have a case where $U$ follows Exp$(1)$ and $V$ follows $\text{U}(0,1)$ in \eqref{facsim}}, and the size distribution of the file that generates most traffic becomes
\alert{
$$
f_{\tilde{X}}(x) = n e^{-x} (x-H_{n-1}) + \sum_{j=2}^n (-1)^j {n\choose j} \frac{j}{j-1} e^{-i x}~,
$$
where the harmonic numbers are $H_j = \sum_{i=1}^j \frac{1}{i}$. This is plotted in Fig. \ref{gamma1000fig} for $n=1000$. The expected size of the file generating most traffic is then
\begin{align}\label{maxtraffic}
\mathcal{E}_\text{max}(n) = 1 + \frac{1}{n} + H_ {n-1},
\end{align}
which is logarithmic in $n$ and is lower and upper bounded by $1+ \frac{1}{n} + \gamma + \{\log (n-1), \log (n)\}$, where $\gamma$ is the Euler-Mascheroni constant.}

\begin{figure}[!h]
\centering
\includegraphics[scale=.4]{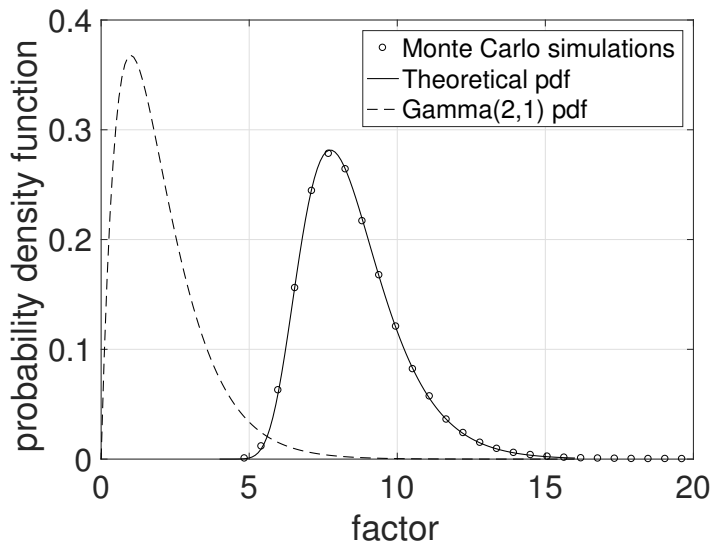}
\caption{\alert{Factor density for parent distribution Gamma(2,1) for the maximum of $n=1000$ products of Gamma$(2,1)$ and $\text{U}(0,1)$ random variables. For comparison, the density of Gamma$(2,1)$, $f_{\Gamma}(x)=x\exp{(-x)}$, is also plotted.}}
\label{gamma1000fig}
\end{figure}

\subsection{Cumulative Size of the Most Important Files}
Here we use the same assumptions as in the previous section for the file size and popularity distributions, but instead of only finding the size of the most important file, we are now interested in finding the cumulative size of $q$ most important files. For example, out of a catalogue of size $n=100$ files, we can then find the expected size of $q=50$ most important files.

We use the expected value of the asymptotic factor distribution \eqref{afd} to find an approximation of the expected size of the $k^{\text{th}}$ least important file with $k\in[n-q+1,n-1]$ and $q>1$. Recall that in the parlance of order statistics, the $k^{\text{th}}$ order statistic corresponds to the $k^{\text{th}}$ smallest variate. Thus, to find the expected sizes of $q$ of the most important files in an $n$-file catalogue, we must study the cases where $k\in[n-q+1,n]$. Also note that as it is clear that the asymptotic result cannot be used for $k=n$, here we simply study the cases $k\in[n-q+1,n-1]$ and use \eqref{maxtraffic} to find the exact expected size of the most important file.

As shown in the previous example, the cumulative distribution of the product of Gamma$(2,1)$ and U$(0,1)$ random variables is $G_2(x) = F_U(x) = 1-\exp{(-x)}$. The inverse of this is $G_2^{-1}(x) = -\log(1-x)$ which implies $\beta(\eta) = \beta(k/n) = G_2^{-1}(\eta) = -\log(1-\eta) = -\log(1-k/n)$. Therefore, we get the limiting distribution
\begin{align}\nonumber
\lim\limits_{k/n=\eta, n\to\infty} f_{\tilde{X}}(x) = \frac{1}{\exp{(-\beta)}x}x\exp{\left(-x\right)} = \exp{(\beta-x)},
\end{align}
where $x\in[\beta,\infty)$. The expected value related to this distribution yields
\begin{align}\label{Ekn}
\mathcal{E}(k,n) = \int_{\beta(k,n)}^\infty x\exp{(\beta(k,n)-x)} {\rm d}x = 1-\log(1-k/n).
\end{align}
%

Finally, we can now find an approximation of the sum of the expected file sizes when $k\in [n-q+1,n-1]$. This sum becomes
\begin{align}
\mathcal{S}'(q,n)=\sum_{i=n-q+1}^{n-1}\mathcal{E}(i,n) = -\log \left(-\left(-\frac{1}{n}\right)^q n (1-q)_{q-1}\right)+q-1,
\end{align}
where $(a)_n$ is the Pochhammer symbol.

Now let $\mathcal{S}(q,n)$ denote the exact value of the sum of the expected sizes of $q$ most important files of an $n$-file catalogue. With \eqref{maxtraffic}, we get
\begin{align}\label{Sqneqn}
      \mathcal{S}(q,n) &\approx \mathcal{E}_\text{max}(n) + \mathcal{S}'(q,n)\nonumber\\ &= \frac{1}{n}+H_ {n-1}+q-\log \left(-\left(-\frac{1}{n}\right)^q n (1-q)_{q-1}\right).
\end{align}
Since a special case of the Pochhammer symbol yields $(0)_0=1$, \eqref{Sqneqn} is valid for all $q\in [1,n]$, and especially $\mathcal{E}_\text{max}(n) = \mathcal{S}(1,n)$.

As an example of cumulative cache size requirements, Fig. \ref{percentageqfig} plots
\begin{align}\label{Reqn}
R(q) = \frac{\mathcal{S}(q,100)}{\mathcal{S}(100,100)},
\end{align}
\emph{i.e.}, the ratio of the approximated cache space consumption of $q$ of the most important files and the approximated whole expected size of a catalogue of $n=100$ files. We see that, for example, caching $32$ most important files already requires a cache that is more than half of the size of the whole catalogue. We also see that the approximation \eqref{Reqn} is well in line with Monte Carlo simulations.
\begin{figure}[!h]
\centering
\includegraphics[scale=.4]{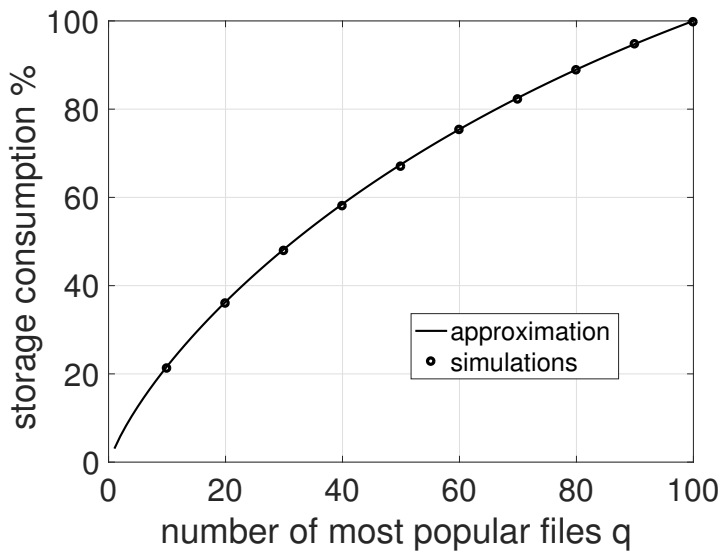}
\caption{Cache space consumption of the most important files.}
\label{percentageqfig}
\end{figure}

\section{Conclusions and Future Work}\label{conclusionssec}
We have derived general closed-form expressions for the densities of the addends and factors of ordered sums and products. Furthermore, we have presented the corresponding asymptotic distributions. Through concrete examples, we have shown some interesting properties of such distributions. As an application to caching, we have derived the size distribution of the file that generates most data traffic, along with tools for approximating cumulative file sizes of a certain number of the most important files by finding the expected values of the asymptotic distributions.

Natural extensions of this work include finding the addend and factor distributions of dependent random variables, as well as expressions for specific distributions with an arbitrary number of addends or factors. Further, from the aspect of wireless communication, especially factor distributions of various channel fading models are of special interest. Also, the harmonic mean, as used in end-to-end SNR calculations, would provide an attractive case for future studies.

\section*{Acknowledgements}
The authors would like to thank Ejder Ba\c{s}tu\u{g} for fruitful discussions.

\begin{appendices}\label{appendixsec}

\section{Proof of factor density of product of normal distributions with $m=n=k=2$}\label{norfac}

Let us focus on the following double integral
\begin{align*}
f_{\tilde{X}}(x) = 2\phi(x)\int_{-\infty}^\infty \frac{1}{\pi} \int_{-\infty}^{zx} K_0(\abs t){\rm d}t\phi(z){\rm d}z.
\end{align*}
Since $x$ can take both positive and negative values, let us first consider the case $x\geq 0$. Noting the fact that $z$ can either be positive or negative, we decompose the above double integral, for $x\geq 0$, as

\begin{align*}
f_{\tilde{X}}(x) = 2\phi(x)\int_{-\infty}^0 \frac{1}{\pi} \int_{-\infty}^{-\abs z x} K_0(\abs t){\rm d}t\phi(z){\rm d}z + 2\phi(x)\int_0^{\infty} \frac{1}{\pi} \int_{-\infty}^{zx} K_0(|t|){\rm d} t\phi(z){\rm d}z.
\end{align*}
Since $K_0(|t|)$ is an even function, we can further simplify the inner integral to yield
\begin{align*}
 f_{\tilde{X}}(x) &=2\phi(x) \left\{\int_{-\infty}^0 \left(\frac{1}{2} - \int_{-|z|x}^0 K_0(|t|){\rm d}t\right)\phi(z){\rm d}z + \int_0^\infty\left(\frac{1}{2}+\frac{1}{\pi}\int_{0}^{zx}K_0(|t|){\rm d}t\right)\phi(z){\rm d}z\right\}\\
 &= \phi(x)-\frac{2\phi(x)}{\pi}\int_{-\infty}^0 \int_{0}^{|z|x}K_0(|t|)\phi(z)\mathop{dz}+\frac{2\phi(x)}{\pi}\int_0^\infty \int_0^{zx}K_0(|t|){\rm d}t\phi(z){\rm d}z.
\end{align*}
The next key step is to observe that $\int_{0}^{|z|x}K_0(|t|)$ and $\phi(z)$ are even functions of $z$. Therefore, we have, for $x\geq 0$,
\begin{align*}
f_{\tilde{X}}(x) & = \phi(x)-\frac{2\phi(x)}{\pi}\int_{0}^\infty \int_{0}^{|z|x}K_0(|t|){\rm d}t\phi(z){\rm d}z+\frac{2\phi(x)}{\pi}\int_0^\infty \int_0^{zx}K_0(|t|){\rm d}t\phi(z){\rm d}z.
\end{align*}
Noting the fact that $\int_0^\infty\int_0^{zx}K_0(|t|){\rm d}t\phi(z){\rm d}z<\infty$, we obtain
\begin{align*}
f_{\tilde{X}}(x) & =\phi(x), \;\;\;x\geq 0.
\end{align*}
The case corresponding to $x<0$ can easily be proved in a similar manner and is thus omitted.

\section{Proof of double scaling limit of addend density}\label{dslproof}
 Let us rewrite (\ref{eq_prop1}) with a slightly modified notation as
\begin{align*}
f_{\tilde{X}}(x) = \frac{n!}{(k-1)! (n-k)!}f_X(x) \mathbb{E}_T\left\{p^{k-1}(1-p)^{n-k}\right\}
\end{align*}
where $p=F_m(x+t)$ with $t=\sum_{j=2}^mx_j$ and $g_{m-1}(t)$ denotes the probability density of $T$. Therefore, we obtain, after some algebraic manipulations
\begin{align}
f_{\tilde{X}}(x) &= 
\frac{\Gamma(n+1)}{\Gamma(k)\Gamma(n-k+1)}f_X(x) \int_{-\infty}^{\infty}F_m^{k-1}(y)[1-F_m(y)]^{n-k}g_{m-1}(y-x)\mathop{dy}\nonumber\\
&= \frac{\Gamma(n+1)}{\Gamma(k)\Gamma(n-k+1)}f_X(x) \int_{-\infty}^{\infty}F_m^{k-1}(y)[1-F_m(y)]^{n-k}g_m(y)\frac{g_{m-1}(y-x)}{g_m(y)}\mathop{dy}.\nonumber
\end{align}
To facilitate further analysis, let us use the variable transformation $z=F_m(y)$ with $dy=g_m(y) dy$ to arrive at
\begin{align*}
f_{\tilde{X}}(x)
&=f_X(x)\int_0^1 \frac{g_{m-1}(F_m^{-1}(z)-x)}{g_{m}(F_m^{-1}(z))}\mathop{dp_{n,k}(z)}
\end{align*}
where 
\begin{align*}
\mathop{dp_{n,k}(z)} =  \frac{\Gamma(n+1)}{\Gamma(k)\Gamma(n-k+1)} z^{k-1}(1-z)^{n-k}\mathop{dz}
\end{align*}
denotes a beta distribution with parameters $k$ and $n-k+1$.
Now the desired double scaling limit can be expressed as
\begin{align*}
\lim_{n,k\to \infty, \frac{k}{n}=\eta} f_{\tilde{X}}(x)
=f_X(x)\lim_{n\to\infty} \int_0^1 \frac{g_{m-1}(F_m^{-1}(z)-x)}{g_{m}(F_m^{-1}(z))}\mathop{dp_{n,\eta n}(z)}.
\end{align*}
The random variable $Z_n$ associated with the beta density function $p_{n,\eta n}(z)$ has the characteristic function 
\begin{align*}
\phi_n{(\omega)}&=\int_0^1 \frac{\Gamma(n+1)}{\Gamma(k)\Gamma(n-k+1)} z^{k-1}(1-z)^{n-k}\exp(z\omega {\rm j})\alert{\mathop{dz}}\\
&={}_1F_1(k;n+1;-j\omega),
\end{align*}
where ${}_1F_1$ is the confluent hypergeometric function of the first kind and ${\rm j}=\sqrt{-1}$. Moreover, clearly, $\phi_n{(\omega)}$ converges $\forall \omega \in (-\infty,\infty)$ and 
\begin{align*}
\phi_{\infty}(\omega)= \exp({\rm j}\omega \eta)
\end{align*}
with $\phi_{\infty}(\omega)$ being continuous at $\omega=0$. Note that the limiting characteristic function corresponds to a point mass at $z=\eta$. This along with the fact that $\frac{g_{m-1}[F_m^{-1}(z)-x]}{g_m[F_m^{-1}(z)]}$ is bounded and continuous for $z\in(0,1)$ enables us to invoke \cite[Theorems 6.3.2 and 4.4.2]{Chung} to yield
\begin{align*}
\lim_{n,k\to \infty, \frac{k}{n}=\eta} f_{\tilde{X}}(x)&=f_X(x)\lim_{n\to\infty} \int_0^1 \frac{g_{m-1}(F_m^{-1}(z)-x)}{g_{m}(F_m^{-1}(z))}\mathop{dp_{n,\eta n}(z)}\nonumber\\
&=f_X(x)\int_0^1\frac{g_{m-1}[F_m^{-1}(z)-x]}{g_m[F_m^{-1}(z)]}\delta(z-\eta)\mathop{dz}
\end{align*}
where $\delta(z)$ is the Dirac delta function. Finally, using the sifting property of the delta function yields the desired result.

\section{Proof of double scaling limit of factor density}\label{dslfacnor}
 Let us rewrite (\ref{eq_prop2}) with a slightly modified notation as
\begin{align*}
f_{\tilde{X}}(x)=\frac{n!}{(n-1)!(n-k)!}f_X(x)\mathbb{E}_V\left\{q^{k-1}(1-q)^{n-k}\right\}
\end{align*}
where $q=F_m(xv)$ with $v=\Pi_{j=2}^nx_j$ and $g_{m-1}(v)$ denotes the pdf of $V$. Therefore, we have 
\begin{align*}
f_{\tilde{X}}(x)
&=\frac{n!}{(n-1)!(n-k)!}f_X(x)\int_{-\infty}^{\infty}\{F_m^{k-1}(xv)\left(1-F_m(xv)\right)^{n-k}g_{m-1}(v)\mathop{dv}
\end{align*}
from which we obtain after introducing the transformation $xv=y$
\begin{align*}
f_{\tilde{X}}(x)&=\frac{n!}{(n-1)!(n-k)!}f_X(x)\int_{-\infty}^{\infty}\{F_m^{k-1}(y)\left(1-F_m(y)\right)^{n-k}\frac{g_{m-1}(y/x)}{|x|}\mathop{dy}\\
&=\frac{n!}{(n-1)!(n-k)!}f_X(x)\int_{-\infty}^{\infty}\{F_m^{k-1}(y)\left(1-F_m(y)\right)^{n-k}g_m(y)\frac{g_{m-1}(y/x)}{xg_m(y)}\mathop{dy}.
\end{align*}
Following similar arguments as before, we can rewrite the above as
\begin{align*}
\lim_{n,k\to \infty, \frac{k}{n}=\eta} f_{\tilde{X}}(x)=f_X(x) \int_0^1\frac{g_{m-1}(y/x)}{|x|g_m(y)} dp_{n,k}(z).
\end{align*}
Now one can closely follow the steps shown in the previous proof to arrive at the desired answer.

\end{appendices}
\newpage
\bibliographystyle{ieeetr}

\begin{thebibliography}{10}

\bibitem{cisco}
Cisco, ``Cisco Visual Networking Index: Forecast and Methodology, 2015--2020," White Paper, http://www.cisco.com/c/en/us/solutions/collateral/service-provider/visual-networking-index-vni/complete-white-paper-c11-481360.pdf, 2016.

\bibitem{belady1966}
L. A. Belady, ``A Study of Replacement Algorithms for a Virtual-Storage Computer," \emph{IBM Syst. J.}, vol. 5, no. 2, pp.~78--101, 1966.

\bibitem{wangoct1999}
J. Wang, ``A Survey of Web Caching Schemes for the Internet," \emph{SIGCOMM Compt. Commun. Review,} pp.~36--46, vol. 29, no. 5, Oct. 1999.

\bibitem{podlipnig2003}
S. Podlipnig and L. B\"osz\"ormenyi, ``A Survey of Web Cache Replacement Strategies," \emph{ACM Comput. Surveys (CSUR),} vol. 35, no. 4, pp.~374--398, 2003.

\bibitem{borst2010}
S. Borst, V. Gupta, and A. Walid, ``Distributed Caching Algorithms for Content Distribution Networks," in \emph{Proc. IEEE INFOCOM}, pp.~1--9, Mar. 2010.


\bibitem{bastug}%
E. Ba\c{s}tu\u{g}, M. Bennis, and M. Debbah, ``Living on the Edge: The Role of Proactive Caching in 5G Wireless Networks," \emph{IEEE Commun. Magazine}, vol.~52, no.~8, pp.~82--89, Aug. 2014.

\bibitem{gregori}%
M. Gregori, J. Gomez-Vilardebo, J. Matamoros, and D. G\"und\"uz, ``Wireless Content Caching for Small Cell and D2D Networks," \emph{IEEE J. Sel. Areas Commun.}, vol.~34, no.~5, pp.~1222--1234, May 2016.

\bibitem{golrezaeibase}%
N. Golrezaei, P. Mansourifard, A. F. Molisch, and A. G. Dimakis, ``Base Station Assisted Device-to-Device Communications for High-Throughput Wireless Video Networks," \emph{IEEE Trans. Wireless Commun.}, vol.~13, no.~7, pp.~3665--3676, Jul. 2014.

\bibitem{maddah}%
M. A. Maddah-Ali and U. Niesen, ``Fundamental Limits of Caching," \emph{IEEE Trans. Inf. Theory}, vol.~60, no.~5, pp.~2856--2867, May 2014.

\bibitem{song}%
J. Song, H. Song, and W. Choi, ``Optimal Caching Placement of Caching System with Helpers," in \emph{Proc. IEEE Int. Conf. Commun. (ICC)}, pp.~1825--1830, Jun. 2015.

\bibitem{ji2016}%
M. Ji, G. Caire, and A. Molisch. ``Fundamental Limits of Distributed Caching in D2D Wireless Networks," in \emph{Proc. IEEE Inf. Theory Wksp. (ITW)}, pp.~1--5, Sept. 2013.

\bibitem{tamoorjune2016}
S. Tamoor-ul-Hassan, M. Bennis, P. H. J. Nardelli, and M. Latva-aho, ``Caching in Wireless Small Cell Networks: A Storage-Bandwidth Tradeoff," \emph{IEEE Commun. Lett.}, vol. 20, no. 6, pp.~1175--1178.

\bibitem{osmanmay2011}%
N. I. Osman, T. El-Gorashi, and J. M. H. Elmirghani, ``Reduction of Energy Consumption of Video-on-Demand Services Using Cache Size Optimization," in \emph{Proc. Wireless and Opt. Commun. Netw. (WOCN)}, pp.~1--5, May 2011.

\bibitem{zhaioptimaldec2011}%
H. Zhai, A. K. Wong, H. Jiang, Y. Sun, and J. Li, ``Optimal P2P Cache Sizing: A Monetary Cost Perspective on Capacity Design of Caches to Reduce P2P Traffic," in \emph{Proc. Parallel and Distrib. Syst. (ICPADS)}, pp.~565--572, Dec. 2011.

\bibitem{liumay2016}
A. Liu and V. K. N. Lau, ``How Much Cache is Needed to Achieve Linear Capacity Scaling in Backhaul-Limited Dense Wireless Networks?," \emph{IEEE/ACM Trans. on Netw.}, no. 99, May 2016.

\bibitem{zhang2015}
J. Zhang, X. Lin, C. C. Wang, and X. Wang, ``Coded Caching for Files with Distinct File Sizes", in \emph{Proc. IEEE Int'l. Symp. on Inf. Theory (ISIT)}, pp.~1686--1690, Jun. 2015.

\bibitem{maggi}
L. Maggi, L. Gkatzikis, G. Paschos, and J. Leguay, ``Adapting Caching to Audience Retention Rate: Which Video Chunk to Store?," arXiv:1512.03274, 2015.

\bibitem{cherkasova1998}
L. Cherkasova, ``Improving WWW Proxies Performance with Greedy-Dual-Size-Frequency Caching Policy," in \emph{HP Technical Report}, Palo Alto, 1998.

\bibitem{ali2011}
W. Ali, S. M. Shamsuddin, and A. S. Ismail, ``A Survey of Web Caching and Prefetching," \emph{Int'l. J. Advances Soft Comput. Appl.}, vol. 3, no. 1, pp.~2074--8523, Mar. 2011.

\bibitem{zhang2012}
J. Zhang, ``A Literature Survey of Cooperative Caching in Content Distribution Networks," arXiv:1210.0071, 2012.

\bibitem{niesen2014}
U. Niesen and M. A. Maddah-Ali, ``Coded Caching with Nonuniform Demands," in \emph{Proc. IEEE INFOCOM}, pp.~221--226, Apr.--May 2014.

\bibitem{romano2011}
S. Romano and H. El Aarag, ``A Neural Network Proxy Cache Replacement Strategy and its Implementation in the Squid Proxy Server," \emph{Neural Comput. and Appl.}, vol. 20, no. 1, pp.~59--78, Feb. 2011.

\bibitem{ali2009}
W. Ali and S. M. Shamsuddin, ``Intelligent Client-Side Web Caching Scheme Based on Least Recently Used Algorithm and Neuro-Fuzzy System," in \emph{Proc. Int'l. Symp. Neural Netw.}, pp.~70--79, May 2009.

\bibitem{abdalla2015}
A. Abdalla, S. Sulaiman, and W. Ali, ``Intelligent Web Objects Prediction Approach in Web Proxy Cache Using Supervised Machine Learning and Feature Selection," in \emph{Int'l. J. Advances Soft Compu. Appl.,} vol. 7, no. 3, pp.~2074--8523, Nov. 2015.

\bibitem{olmos}
F. Olmos, B. Kauffmann, A. Simonian, and Y. Carlinet, ``Catalog Dynamics: Impact of Content Publishing and Perishing on the Performance of a LRU Cache," in \emph{Proc. Int'l. Teletraffic Congress}, Sept. 2014.

\bibitem{traverso}
S. Traverso, M. Ahmed, M. Garetto, P. Giaccone, E. Leonardi, and S. Niccolini, ``Unravelling the Impact of Temporal and Geographic Locality in Content Caching Systems," \emph{IEEE Trans. Multimedia}, vol.~7, no.~3, pp.~1839--1854, Sept. 2015.

\bibitem{fagin}
R. Fagin and T. G. Price. ``Efficient Calculation of Expected Miss Ratios in the Independent Reference Model," \emph{SIAM J. Comput.}, vol.~7, no.~3, pp.~288--297, 1978.

\bibitem{bharath}
B. N. Bharath, K. G. Nagananda, and H. V. Poor,  ``A Learning-Based Approach to Caching in Heterogenous Small Cell Networks," \emph{IEEE Trans. Commun.}, vol.~64, no.~4, pp.~1674--1686, Apr. 2016.

\bibitem{clark}
C.~E. Clark, ``The Greatest of a Finite Set of Random Variables,'' \emph{Operations Research}, vol.~9, pp.~145--162, Mar. 1961.

\bibitem{david}
H.~A. David and H.~N. Nagaraja, {\em Order Statistics}.
\newblock Wiley, 2005.

\bibitem{gungor}
M.~G{\"u}ng{\"o}r, Y.~Bulut, and S.~{\c C}alik, ``Distributions of Order Statistics,'' {\em Appl. Math. Sciences}, vol.~3, pp.~795--802, Apr. 2009.

\bibitem{wilks}
S.~S. Wilks, ``Order statistics,'' {\em Bulletin of the American Mathematical Society}, vol.~54, no.~1, pp.~6--50, 1948.

\bibitem{donahue1964}
J.~D. Donahue, {\em Products and Quotients of Random Variables and Their Applications}.
\newblock Government Publication, 1964.

\bibitem{rohat1976}
V.~K. Rohatgi, {\em An Introduction to Probability Theory Mathematical Statistics}.
\newblock John Wiley and Sons, New York, 1976.

\bibitem{springer1979}
M.~D. Springer, {\em The Algebra of Random Variables}.
\newblock Wiley, 1979.

\bibitem{cook1981}
J. I. D. Cook, ``The H-function and Probability Density Functions of Certain Algebraic Combinations of Independent Random Variables with H-function Probability Distribution," Ph.D. dissertation, The University of Texas at Austin, Austin, TX, May 1981. 

\bibitem{mixed}
J.~M. Wozencraft and I.~M. Jacobs, {\em Principles of Communication Engineering}.
\newblock Wiley, 1965.

\bibitem{middleton}
D. Middleton, \emph{An Introduction to Statistical Communication Theory.} New York: McGraw-Hill, 1960.

\bibitem{PD19}
P. Dharmawansa, N. Rajatheva, and K. Ahmed, ``On the Distribution of the Sum of Nakagami-m Random Variables," \emph{IEEE Trans. Commun.}, vol.~55, no.~7, pp.~1407--1416, Jul. 2007.

\bibitem{weissuni}
E. W. Weisstein, ``Uniform Product Distribution." From MathWorld--A Wolfram Web Resource. http://mathworld.wolfram.com/UniformProductDistribution.html

\bibitem{Chung}
K. L. Chung, \emph{A Course in Probability Theory}, 3rd ed., New York: Academic Press, 2001.

\bibitem{weiss}
E. W. Weisstein, ``Normal Product Distribution." From MathWorld--A Wolfram Web Resource. http://mathworld.wolfram.com/NormalProductDistribution.html

\bibitem{karakacha}
K. K. Karakacha, ``Exponential Distribution: its Constructions, Characterizations and Related Distributions ," Master's thesis, School of Mathematics, University of Nairobi, Nairobi, Kenya, May 2009. 

\bibitem{gradshteyn}
I. S. Gradshteyn and I. M. Ryzhik, \emph{Table of Integrals, Series, and Products.} New York: Academic, 1980.

\bibitem{johnson}
N. L. Johnson, S. Kotz, and N. Balakrishnan. {\em Continuous Univariate Distributions}, vol. 2, 2nd edition. Wiley. p. 306, 1995. ISBN 0-471-58494-0.

\end{thebibliography}

\end{document}